\DeclareMathOperator{\indeg}{InDeg}
\DeclareMathOperator{\outdeg}{OutDeg}
\DeclareMathOperator{\innbr}{InNbr}
\DeclareMathOperator{\outnbr}{OutNbr}
\newcommand{\KVALUE}{\lceil 2m/\eps \rceil}
\newif\ifnames
\title{Efficient Catalytic Graph Algorithms\footnote{%
This paper subsumes a manuscript by the first author which showed how to simulate random walks on acyclic graphs~\cite{ECCCWalks}.%
}}
\author{James Cook \\[-4pt]\large\texttt{falsifian@falsifian.org} \and Edward Pyne \\[-4pt]\large\texttt{epyne@mit.edu}}
\author{Anonymous authors}
\begin{document}
\begin{titlepage}
\maketitle

\begin{abstract}
We give fast, simple, and implementable catalytic logspace algorithms for two fundamental graph problems. 

First, a randomized catalytic algorithm for $s\to t$ connectivity running in $\widetilde{O}(nm)$ time, and a deterministic catalytic algorithm for the same running in $\widetilde{O}(n^3 m)$ time. The former algorithm is the first algorithmic use of randomization in $\mathsf{CL}$. The algorithm uses one register per vertex and repeatedly ``pushes'' values along the edges in the graph.

Second, a deterministic catalytic algorithm for simulating random walks which in $\widetilde{O}( m T^2 / \varepsilon )$ time estimates the probability a $T$-step random walk ends at a given vertex within $\varepsilon$ additive error.
The algorithm uses one register for each vertex and increments it at each visit to ensure repeated visits follow different outgoing edges.

Prior catalytic algorithms for both problems did not have explicit runtime bounds beyond being polynomial in $n$.
\end{abstract}

\vfill

{
\centering
\STConnFigure{T}{F}
\nobreak\quad\nobreak\( \Rightarrow \)\nobreak\quad\nobreak
\STConnFigure{T}{T}
}

\smallskip

\noindent
Detecting \( s \to t \) paths by adding values along edges: a change (\( +1 \)) to \( s \) reaches \( t \).

\bigskip

{
\centering
\begin{tikzpicture}
\path
  +(0,0.6) node (00) [active configuration] {\( \boldsymbol\uparrow \)}
  +(1.3,1.2) node (10) [active configuration] {\( \boldsymbol\downarrow \)}
  +(1.3,0) node (11) [inactive configuration] {\( \uparrow \)}
  +(2.6,1.8) node (20) [inactive configuration] {\( \downarrow \)}
  +(2.6,0.6) node (21) [active configuration] {\( \boldsymbol\uparrow \)}
  +(2.6,-0.6) node (22) [inactive configuration] {\( \downarrow \)}
  +(3.9,1.2) node (30) [active configuration] { }
  +(3.9,0) node (31) [inactive configuration] { }
  (00) edge [active edge] (10)
  (00) edge [inactive edge] (11)
  (10) edge [inactive edge] (20)
  (10) edge [active edge] (21)
  (11) edge [inactive edge] (21)
  (11) edge [inactive edge] (22)
  (20) edge [inactive edge] (30)
  (20) edge [inactive edge] (31)
  (21) edge [active edge] (30)
  (21) edge [inactive edge] (31)
  (22) edge [inactive edge] (30)
  (22) edge [inactive edge] (31)
  ;
\end{tikzpicture}

\begin{tikzpicture}
\path
  +(0,0.6) node (00) [active configuration] {\( \boldsymbol\downarrow \)}
  +(1.3,1.2) node (10) [inactive configuration] {\( \uparrow \)}
  +(1.3,0) node (11) [active configuration] {\( \boldsymbol\uparrow \)}
  +(2.6,1.8) node (20) [inactive configuration] {\( \downarrow \)}
  +(2.6,0.6) node (21) [active configuration] {\( \boldsymbol\downarrow \)}
  +(2.6,-0.6) node (22) [inactive configuration] {\( \downarrow \)}
  +(3.9,1.2) node (30) [inactive configuration] { }
  +(3.9,0) node (31) [active configuration] { }
  (00) edge [inactive edge] (10)
  (00) edge [active edge] (11)
  (10) edge [inactive edge] (20)
  (10) edge [inactive edge] (21)
  (11) edge [active edge] (21)
  (11) edge [inactive edge] (22)
  (20) edge [inactive edge] (30)
  (20) edge [inactive edge] (31)
  (21) edge [inactive edge] (30)
  (21) edge [active edge] (31)
  (22) edge [inactive edge] (30)
  (22) edge [inactive edge] (31)
  ;
\end{tikzpicture}

\begin{tikzpicture}
\path
  +(0,0.6) node (00) [active configuration] {\( \boldsymbol\uparrow \)}
  +(1.3,1.2) node (10) [active configuration] {\( \boldsymbol\uparrow \)}
  +(1.3,0) node (11) [inactive configuration] {\( \downarrow \)}
  +(2.6,1.8) node (20) [active configuration] {\( \boldsymbol\downarrow \)}
  +(2.6,0.6) node (21) [inactive configuration] {\( \uparrow \)}
  +(2.6,-0.6) node (22) [inactive configuration] {\( \downarrow \)}
  +(3.9,1.2) node (30) [inactive configuration] { }
  +(3.9,0) node (31) [active configuration] { }
  (00) edge [active edge] (10)
  (00) edge [inactive edge] (11)
  (10) edge [active edge] (20)
  (10) edge [inactive edge] (21)
  (11) edge [inactive edge] (21)
  (11) edge [inactive edge] (22)
  (20) edge [inactive edge] (30)
  (20) edge [active edge] (31)
  (21) edge [inactive edge] (30)
  (21) edge [inactive edge] (31)
  (22) edge [inactive edge] (30)
  (22) edge [inactive edge] (31)
  ;
\end{tikzpicture}

}

\smallskip

\noindent
Simulating a random walk, using one bit per node to alternate between random choices.

\vfill

\bigskip

\thispagestyle{empty}
\end{titlepage}

\section{Introduction}

\subsection{Catalytic space}

In the catalytic space model, an algorithm has two tapes to use as memory: a smaller ordinary tape, and a larger ``catalytic'' tape which starts out filled with arbitrary data.
The algorithm may freely write to and read from both tapes, but when it finishes, the catalytic tape's original content must be restored.
Often the working tape has size \( O( \log n ) \) and the catalytic tape has size \( \Poly( n ) \), defining the decision class catalytic logspace or \( \CL \).
Buhrman, Cleve, Kouck\'y, Loff and Speelman~\cite{Catalytic} initiated the study of this model with their surprising result that evaluation of logarithmic depth threshold circuits was possible in $\CL$, from which it follows for example that $s\ra t$ connectivity and estimation of random walks (the complete problems for nondeterministic logspace ($\NL$) and randomized logspace ($\BPL$) respectively) are in \( \CL \).

Broadly speaking, two ways to take advantage of catalytic space have been explored previously: compression-based techniques are able to make use of incompressible catalytic tapes (as randomness, for example); and algebra-based techniques treat the tape as an array of registers which they use to execute a ``straight-line program'', a pre-determined sequence of mathematical operations.
The original \( \TC^1 \subseteq \CL \) result is an example of the algebraic approach.
The compression-based approach began with a ``compress-or-random'' argument that \( \BPL \subseteq \CL \) (Mertz~\cite{MertzSurvey} gives a sketch of it). Several subsequent works~\cite{OpenDistinguisher,BplSmallCl,CBPL,CNL} applied this technique further, with Cook, Li, Mertz and Pyne~\cite{CBPL} using it to derandomize randomized $\CL$ itself. All of these results in both branches have a \say{complexity} flavor --- in particular, the runtime for the relevant problem is some large polynomial.\footnote{Cook~\cite{CookBlog2021} implemented a register program for $s\ra t$ connectivity with an estimated runtime of $\Theta( n^{8} )$, and catalytic tape size $\Theta(n^3 \log n)$.}

We give new techniques for deciding graph connectivity and estimating random walks. Our algorithms are very simple and implementable, and allow a precise analysis of the catalytic space consumption and runtime.\footnote{For the runtime bounds, we assume we have RAM access to the catalytic tape and oracle access to the graph. This does not affect the structure of the class (as we can simulate a catalytic RAM with a standard catalytic machine with a polynomial slowdown).} We hope this will initiate the study of $\CL$ from an algorithmic perspective.

\subsection{Our Results: Connectivity}
We first state our results for connectivity, beginning with our deterministic algorithm:
\begin{restatable}{theorem}{mainNL}\label{thm:mainNL}
    There is a catalytic algorithm that, given a simple directed graph with $n$ vertices and $m$ edges together with vertices $s,t$, decides whether there exists a path from $s$ to $t$.
    The algorithm uses $O(\log n )$ workspace and $\tO(n^2)$ catalytic space, and runs in time $\tO(n^3m)$.
\end{restatable}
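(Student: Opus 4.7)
The plan is to store one register $r_v \in \mathbb{Z}_N$ per vertex on the catalytic tape, with $N = 2^{\tO(n)}$, so the $n$ registers together occupy $\tO(n^2)$ bits. All manipulations are reversible ``pushes along an edge'' $(u,v)$: $r_v \gets r_v + r_u \pmod{N}$, with inverse $r_v \gets r_v - r_u$ available to rewind. Let $L$ be the invertible linear operator on $\mathbb{Z}_N^n$ obtained by pushing along every edge once, in the order the graph oracle enumerates them.

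The core correctness claim is that $(L^{n-1})_{t,s} \not\equiv 0 \pmod{N}$ if and only if $s$ reaches $t$. Over the integers, $(L^{n-1})_{t,s}$ counts certain admissible ``push chains'' from $s$ to $t$ spanning $n-1$ passes, so if $s$ cannot reach $t$ it is $0$. For the other direction I would show by induction that the set $\{v : (L^j)_{v,s} \ge 1\}$ grows by at least one out-neighbourhood per pass, so after $n-1$ passes it contains every vertex reachable from $s$ and in particular $t$; since all contributions are nonnegative integers, the mod-$N$ reduction stays nonzero provided the integer value stays below $N$.

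Given the claim, the decision procedure is to extract $r_t$ bit-by-bit, twice, and check whether a single increment at $s$ changes the outcome. For each bit index $i = 1, \ldots, \lceil \log_2 N \rceil$: first, apply $L^{n-1}$, copy bit $i$ of $r_t$ into workspace, then apply $L^{-(n-1)}$; second, do $r_s \gets r_s + 1$, apply $L^{n-1}$, copy bit $i$ of $r_t$ into workspace, apply $L^{-(n-1)}$, and $r_s \gets r_s - 1$. Output ``connected'' iff the two recorded bits ever disagree; by linearity they disagree for some $i$ iff $(L^{n-1})_{t,s} \not\equiv 0 \pmod{N}$. The work per bit is two applications of $L^{\pm(n-1)}$, i.e., $2(n-1)m$ pushes on $\tO(n)$-bit registers, which is $\tO(n^2 m)$ bit operations; over $\tO(n)$ bits the total runtime is $\tO(n^3 m)$. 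The workspace only needs $O(\log n)$-bit counters (bit index, pass index, edge index) and two retrieved bits, and every modification to the catalytic tape is matched by its inverse, so the tape is restored.

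The step I expect to be the hard part is upper-bounding the integer $(L^{n-1})_{t,s}$ in order to rule out spurious cancellation mod $N$. A naive bound is much too weak: within a single pass a push $(u,v)$ followed by $(v,w)$ applies to the already-updated $r_v$, so for adversarial orders $L$ can have entries as large as $2^{\Omega(m)}$, forcing $N$ to exceed our $2^{\tO(n)}$ budget. What I would need is $(L^{n-1})_{t,s} \le n^{O(n)}$, which I would try to prove by mapping each push chain contributing to $r_t$ to an underlying walk of length at most $n-1$ in $G$ together with a ``schedule'' of pass labels, and bounding the number of schedules by a polynomial factor so the total is $n^{O(n)}$. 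Setting $\log N = c\, n \log n$ for a large enough constant $c$ then makes the modular reduction faithful and completes the correctness argument.
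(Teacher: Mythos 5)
Your framework --- push register values along edges, extract the answer bit by bit, and undo pushes to restore the tape --- is the same one the paper uses, and you have correctly located the crux: bounding the magnitude of the accumulated coefficient. But the bound you hope to prove is false. Consider a star with center $v$, leaves $u_1, \ldots, u_{n-1}$, and edges $(v, u_i)$ and $(u_i, v)$ for all $i$, pushed in the order $(v,u_1), (u_1,v), (v,u_2), (u_2,v), \ldots$. Tracing the coefficient of the initial $r_v$ inside $r_v$: it is $2$ after the first pair of pushes, $4$ after the second, and $2^{n-1}$ after one full pass, so $(L)_{v,v} = 2^{n-1}$ even though $m = \Theta(n)$. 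Then $(L^{n-1})_{v,v} \ge 2^{(n-1)^2}$, which overwhelms any $N$ of size $2^{\tO(n)}$. The mapping you sketch --- to underlying walks of length at most $n-1$ --- does not control the count, because a push chain inside a \emph{single} pass already corresponds to a walk of length up to $m$ that can revisit vertices arbitrarily often; $(L)_{t,s}$ counts walks with edge indices increasing within the pass, and these can be exponentially numerous.

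The paper sidesteps this with one structural change: keep two registers per vertex, indexed by a parity bit $\sigma \in \{0,1\}$ (see \cref{thm:stnonzero}). Each pass reads only from the $\sigma$-registers and writes only to the $\neg\sigma$-registers (accumulating the sum over in-neighbours, with a dummy self-loop at every vertex carrying the previous value forward). Because the registers read and the registers written within a pass are disjoint, no push feeds into another push in the same pass, and each pass corresponds exactly to multiplication by a fixed $0/1$ matrix independent of the edge order. After $n$ passes the relevant count is a sum of at most $(n+1)^n$ nonnegative terms, so $\ell = O(n\log n)$ bits per register with modulus $q = 2^\ell$ give a faithful computation; $2n$ such registers then occupy $\tO(n^2)$ bits as required. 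With that modification your bit-by-bit extraction and forward/reverse push structure go through unchanged and yield the stated $\tO(n^3 m)$ runtime.
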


Next, we give a randomized algorithm that improves the runtime to $\tO(nm)$ and space usage to $\tO(n)$, only a factor of $n$ in runtime from the linear-workspace BFS algorithm:
\begin{restatable}{theorem}{mainNLrand}\label{thm:mainNLrand}
    There is a randomized catalytic algorithm\footnote{The algorithm always resets the catalytic tape no matter the random coins. If there is no $s\ra t$ path, the algorithm always returns no, and if there is an $s\ra t$ path returns yes with probability at least $1/2$.}
    that, given a simple directed graph with $n$ vertices and $m\ge n$ edges together with vertices $s,t$, decides whether there exists a path from $s$ to $t$.
    The algorithm uses $O(\log n)$ workspace and $\tO(n)$ catalytic space, and runs in time $\tO(n m)$.
\end{restatable}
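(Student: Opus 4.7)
The plan is to view the catalytic tape as $n$ registers, one per vertex, each holding a value in $\mathbb{Z}/p\mathbb{Z}$ for a prime $p$ of $O(\log n)$ bits chosen at the outset. The \emph{push} along a directed edge $(u,v)$ is the assignment $a_v \leftarrow a_v + a_u \pmod{p}$, which is linear and reversed by $a_v \leftarrow a_v - a_u$. Fix any enumeration of the $m$ edges; a \emph{sweep} pushes along each edge once, in that order. The algorithm then performs four passes: (i)~$n$ forward sweeps, record $y_1 \gets a_t$; (ii)~$n$ backward sweeps (reverse the sweep order and reverse the edge order within each sweep); (iii)~set $a_s \gets a_s + 1$, then $n$ forward sweeps, record $y_2 \gets a_t$; (iv)~$n$ backward sweeps, then $a_s \gets a_s - 1$. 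Output YES iff $y_2 \neq y_1$. Because each undo pass is performed in strict reverse order, every push is cancelled by its matching undo, so the catalytic tape is fully restored for every choice of randomness.

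Because pushes are $\mathbb{F}_p$-linear, the net effect of $n$ forward sweeps is a linear map $L \in \mathbb{F}_p^{V \times V}$ depending only on the graph and push order, and $y_2 - y_1 \equiv L_{ts} \pmod{p}$. Expanding the matrix product, $L_{ts}$ equals over $\mathbb{Z}$ the number of \emph{chronological walks} from $s$ to $t$: sequences of push-times $\tau_1 < \cdots < \tau_\ell$ whose corresponding edges form an $s\to t$ walk in $G$. If $s\to t$ in $G$, any simple $s$--$t$ path has length $\ell \leq n-1$, and scheduling its $i$-th edge in sweep $i$ realizes it as a chronological walk, so $L_{ts} \geq 1$ over $\mathbb{Z}$. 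If $s \not\to t$ in $G$, no chronological walk can exist, so $L_{ts} = 0$.

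It remains to choose $p$ so that $L_{ts} \not\equiv 0 \pmod{p}$ whenever $L_{ts} > 0$. A coarse bound gives $L_{ts} \leq 2^{nm}$ (at most that many chronological walks of length $\leq nm$ exist), so $L_{ts}$ has at most $nm$ distinct prime factors. By the prime number theorem, a uniformly random prime in $[N, 2N]$ for $N = \Theta(nm \log(nm))$ avoids all of them with probability $\geq 1/2$; such a prime is sampled using $O(\log n)$ random bits and $O(\log n)$ workspace via rejection plus Miller--Rabin. This yields completeness with probability $\geq 1/2$, while soundness is unconditional because $L_{ts} = 0$ deterministically whenever $s \not\to t$.

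The accounting matches the theorem: each register takes $O(\log p) = O(\log n)$ bits, giving $\tO(n)$ catalytic space; the work tape holds $p$, $y_1$, $y_2$, and a few counters. Each sweep does $m$ pushes of $O(\log n)$-time arithmetic, so $4n$ sweeps total $\tO(nm)$ time. The main technical nuisance I anticipate is the storage layer: since the initial register bits are arbitrary and need not lie in $[0, p-1]$, I would allocate $2\lceil \log p \rceil$ bits per register and split each initial value $b_v$ as $b_v = q_v p + a_v$ with $a_v \in [0, p-1]$, operating in place on $a_v$ during every push and preserving the quotient $q_v$. This keeps the arithmetic genuinely modulo $p$ and the whole procedure in-place and reversible on the original catalytic bits.
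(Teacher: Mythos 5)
Your high-level algorithmic idea is sound and is in fact a genuinely different route from the paper. You push in place along an arbitrary edge enumeration (so a single register per vertex suffices), whereas the paper either keeps a register per vertex per layer (\cref{thm:stcount}) or alternates between two registers per vertex in an odd/even scheme (\cref{thm:stnonzero}). Your ``chronological walk'' observation that $L_{ts}$ expands as $\sum_{S\subseteq[nm]}\bigl(\prod_{i\in S}E_i\bigr)_{ts}$ and so counts increasing sequences of push-times forming an $s\to t$ walk is correct, and the scheduling argument (edge $i$ of a simple path in sweep $i$) establishes that $L_{ts}>0$ iff $s\to t$. The paper instead argues through a layered quantity $\zeta_{G,s,t}\le(n+1)^n$; your bound $L_{ts}\le 2^{nm}$ is larger but still $\poly(n)$ bits, so the modulus remains $O(\log n)$ bits. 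Sampling a random prime by rejection and succeeding in one shot (instead of the paper's random integer plus $O(\log n)$ iterations via \cref{lem:divis}) is also a fine alternative.

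The gap is in the register implementation, which is exactly the point you flagged as a ``technical nuisance'' and then waved away. Writing $b_v = q_v p + a_v$, keeping $q_v$ fixed, and rewriting only $a_v$ does \emph{not} always fit back into $\ell' = 2\lceil\log p\rceil$ bits. When $q_v = q_{\max} = \lfloor(2^{\ell'}-1)/p\rfloor$, you have $(q_{\max}+1)p > 2^{\ell'}$, so storing $q_{\max}p + a_v'$ overflows for many values $a_v'\in[0,p-1]$. The catalytic tape is adversarial: if the adversary sets a register to all ones, then $q_v = q_{\max}$ for every prime $p$ you might choose, and the very first push onto that vertex can corrupt the register, breaking both correctness and restoration. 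No amount of enlarging $\ell'$ by a constant factor fixes this, because the top ``sliver'' $[q_{\max}p, 2^{\ell'})$ is always nonempty. The paper resolves this (\cref{sec:catalyticRegisters} and \cref{fct:shift}) by picking $\ell$ a constant factor larger, working modulo $qd$ for the largest $d$ with $qd\le 2^\ell$, applying a \emph{random shift} $\beta$ to every register (which can be undone), checking that every shifted register is valid (below $qd$), and aborting if not; the abort probability is made small by a union bound. Your proof needs some version of that shift-and-check step (or an equivalent, e.g.\ BCKLS's bijection technique), and without it the claim that the procedure is ``in-place and reversible on the original catalytic bits'' is false.
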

As far as we know, this result is the first \textit{use} for randomness in catalytic computing --- rather than using it to place a new problem in the class (now provably impossible~\cite{CBPL}), we use it to give an algorithmic speedup. 
\begin{remark}
    Under conjectured time-space tradeoffs for connectivity, the catalytic space usage of \cref{thm:mainNLrand} is optimal up to polylogarithmic factors, even for an algorithm running in arbitrarily large polynomial time.\footnote{It can be shown that a randomized catalytic logspace algorithm for $s\ra t$ connectivity using $n^{1-\eps}$ catalytic space would imply a randomized $n^{1-\eps}$ space, polynomial-time algorithm for $s\ra t$ connectivity, which has been open for decades~\cite{Wig92} and has been conjectured not to exist.}
\end{remark}

In fact, we can obtain an additional desirable property. The practical motivation for $\CL$ is to borrow temporarily unused space to perform useful computation. Unfortunately, if any part of the borrowed section of memory is needed during the computation of the catalytic machine, the original owner may need to wait for the entire catalytic computation to finish. As such, existing catalytic algorithms do not seem to permit sharing a single section of memory without huge latency.\footnote{Alongside the large runtime of prior algorithms, this was mentioned as the primary issue for practical catalytic computing by Cook~\cite{CookBlog2021}.} To rectify this problem, we define a notion of catalytic algorithms that must permit fast query access to the \textit{original} memory configuration at all times:
\begin{definition}
    A catalytic algorithm $A$ is \emphdef{locally revertible in time $t$} if at any point during the execution of $A$ with initial tape $\tau$, the algorithm can be paused and queried on an index $i$, and will return $\tau_i$ in time $t$ (and then continue its execution).
\end{definition}

We show that (at the cost of polynomially more catalytic space) our randomized connectivity algorithm can be made locally revertible for $t=\polylog(n)$:
\begin{restatable}{theorem}{mainNLrandrevert}\label{thm:mainNLrandrevert}
    There is a randomized catalytic algorithm
    that, given a simple
    directed graph with $n$ vertices and $m\ge n$ edges together with vertices $s,t$, decides whether there exists a path from $s$ to $t$.
    The algorithm uses $O(\log n)$ workspace and $\tO(n^3)$ catalytic space, and runs in time $\tO(nm)$. Moreover, the algorithm is locally revertible in time $\polylog(n)$.
\end{restatable}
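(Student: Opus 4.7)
The strategy is to wrap the $\tO(n)$-space randomized algorithm of \cref{thm:mainNLrand} in a revertibility layer that uses the additional $\tO(n^3)$ catalytic space to maintain an explicit record of every position whose current tape value differs from $\tau$. The record is kept as a catalytic dictionary supporting insert, delete, and lookup in $\polylog(n)$ time; together these allow any query for $\tau_i$ to be answered in $\polylog(n)$ time, as required.

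I would start by rewriting \cref{thm:mainNLrand} as a stream of $\tO(nm)$ elementary read/write operations against a virtual $\tO(n)$-bit working memory $W$. In the original construction, $W$ is kept implicitly by XORing it into a fixed catalytic region, so recovering $\tau_i$ in that region requires knowing the relevant bit of $W$, which does not fit in the $O(\log n)$ workspace and is not accessible in polylogarithmic time. The wrapper replaces this implicit representation with an explicit dictionary $D$ whose key-value entries $(i,\delta_i)$ record the current diff $\mathrm{tape}\oplus\tau$ at each modified position. A tape query at index $i$ is answered by one lookup: if $i\in D$ return $\mathrm{tape}_i\oplus D[i]$, otherwise return $\mathrm{tape}_i$, which equals $\tau_i$ since the wrapper never writes to positions not recorded in $D$.

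Next, I would construct the catalytic dictionary. It occupies $\tO(n^3)$ catalytic bits and supports a capacity of $\tO(n)$ active entries. Each cell stores an $(i,\delta_i)$ pair by XORing a fixed-length record onto a tape slice; cell occupancy is encoded by a signature bit that is a known pseudorandom function of the algorithm's random seed (kept in workspace), so occupancy can be detected without knowing $\tau$ at the cell. A cuckoo-style layout with a $\polylog(n)$-length probe sequence keeps every operation polylogarithmic with high probability, so the total runtime of \cref{thm:mainNLrand} becomes $\tO(nm)\cdot\polylog(n)=\tO(nm)$, and the tape is correctly restored by deleting all dictionary entries in reverse order at the end.

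The main obstacle is that a query for an index $i$ lying inside the dictionary's own storage region must still return $\tau_i$, even though the dictionary has modified that position. I would resolve this recursively: the dictionary records its own diff in a secondary, smaller dictionary, which records its diff in a tertiary one, and so on. Because each recursion level reduces the active-diff size by a polynomial factor (from $\tO(n)$ pointers plus values, to $\tO(n/\polylog)$, and so on), a constant number of levels drives the state below the $O(\log n)$ workspace budget, at which point queries to the bottom-level region are answered directly in workspace. Verifying that this layered construction fits into $\tO(n^3)$ catalytic space, that queries and updates at every level remain $\polylog(n)$, and that the tape is correctly restored at the end is the key technical step.
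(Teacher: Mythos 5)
Your proposal takes a genuinely different tack from the paper, and I do not think it works as stated. The paper does not wrap the $\tO(n)$-space algorithm in a generic diff-tracking layer; instead it exploits the algebraic structure of the edge-push algorithm directly. It first transforms $G$ into a graph $G'$ with maximum in-degree $2$ and $n'=O(n^2)$ vertices (\cref{lem:reducedegree}), and then runs the \emph{layered} version of the push algorithm (\cref{thm:stcount}) which keeps one register per $(\text{layer},\text{vertex})$ pair rather than collapsing to two layers. Because each register $R_{(i,v)}$ is only ever written once per push sequence, and is written via a \emph{linear} operation $R_{(i,v)}\la \tau_{(i,v)}+\sum_j R_{(i-1,u_j)}$ where the previous layer's registers are still intact, its original value can be recovered by reading the current in-neighbor registers and subtracting. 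With in-degree $O(1)$ and $\ell=O(\log n)$-bit registers, each reversion query is $\polylog(n)$ time. The $\tO(n^3)$ catalytic space arises from having $\tO(n)$ layers $\times$ $O(n^2)$ vertices. Revertibility is thus a property of the computation itself, not a bolted-on logging mechanism.

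The gap in your proposal is the recursion. You claim each level of the dictionary "reduces the active-diff size by a polynomial factor" so that a constant number of levels suffice. But the level-$1$ dictionary must record $\tO(n)$ diff positions from the main algorithm, and to do so it occupies $\tO(n)$ cells, each of whose occupancy bit and payload must change relative to $\tau$. Thus the level-$1$ dictionary's own diff has $\tO(n)$ positions, and the level-$2$ dictionary must be of comparable size — there is no polynomial (or even super-constant) shrinkage from level to level, so the recursion does not bottom out in $O(\log n)$ workspace in a constant number of levels. Even setting this aside, the occupancy-signature mechanism is fragile: the catalytic tape starts in an arbitrary (possibly adversarial) configuration, so an $O(\log n)$-bit signature can produce false positives with probability $1/\poly(n)$, which is too large to union-bound over $\tO(n)$ cells and $\tO(nm)$ operations while maintaining the paper's guarantee that the tape is restored on \emph{every} initial configuration regardless of the coins. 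Making such a catalytic dictionary rigorous is plausibly an interesting problem in its own right, but it is not a drop-in wrapper; the paper sidesteps it entirely by choosing a computation whose writes are locally and linearly invertible.
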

As far as we are aware, no previous catalytic algorithm is locally revertible for any time bound smaller than its runtime. This new property strengthens the motivation for catalytic space: rather than borrowing an unused hard drive, the catalytic algorithm only needs to borrow the ability to \textit{write} to that hard drive, since at all times the original data will be accessible for reading with small latency.

\subsection{Our Results: Random Walks}
Next, we give an efficient catalytic simulation of random walks:
\begin{restatable}[random walk on a general graph]{theorem}{ThmRandomGeneral}\label{thm:randomGeneral}
    There is a catalytic algorithm that, given a graph with \( n \) vertices and \( m \ge n \) edges, together with vertices $s,t$ and parameters $T \in \N, \eps>0$, returns $\rho$ such that
    \[
    \left|\rho - \Pr[\text{$T$-step random walk from $s$ ends at $t$}]\right|\le \eps \ldotp
    \]
    The algorithm runs in time $\tO( mT^2/\eps )$ and uses $O(\log(mT/\eps))$ workspace and \( \tO( nT \cdot \log( m / \eps ) ) \) catalytic space.
\end{restatable}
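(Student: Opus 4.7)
The plan is to use a ``counter per vertex per time step'' catalytic data structure to derandomize the random walk in a controlled way. I would reserve on the catalytic tape an array of $nT$ counters $c_{v,j}$ for $v \in V$, $j \in \{0, \ldots, T-1\}$, each of width $w = \lceil \log_2(K+1) \rceil$ bits where $K = \lceil Tm/\eps \rceil$; the total catalytic space is then $\tO(nT \log(m/\eps))$. All counter arithmetic is taken modulo $2^w$, and since every increment will later be matched by a decrement, the initial tape contents will be restored exactly.

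The algorithm has two stages. In the \emph{forward stage} it performs $K$ deterministic ``walks'' from $s$: in walk $i$, starting at $v_0 = s$, for each $j = 0, \ldots, T-1$, it reads $x = c_{v_j, j}$, sets $v_{j+1}$ to the $(x \bmod \outdeg(v_j))$-th out-neighbor of $v_j$, and then increments $c_{v_j, j}$. A counter $N$ in workspace tallies the walks with $v_T = t$, so that $\rho := N/K$ will be our estimate. In the \emph{reverse stage} we perform backward walks in reverse order $i = K, K-1, \ldots, 1$: each starts at $v_0 = s$, and at each step $j$ first decrements $c_{v_j, j}$ and then uses the new value modulo $\outdeg(v_j)$ to choose the next vertex. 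Only after this stage do we output $\rho$.

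For correctness of the estimate, let $q_j$ denote the empirical distribution of the $K$ forward walks' positions at time $j$ and $p_j$ the true random-walk distribution (so $q_0 = p_0 = \delta_s$). By construction, the $k_v := K\,q_j(v)$ walks that sit at $v$ at time $j$ round-robin through the $\outdeg(v)$ out-edges, so the fraction taking any fixed out-edge differs from $k_v/(K\,\outdeg(v))$ by less than $1/K$. Summing over the $m$ edges and comparing to the ideal pushforward $A q_j$ (where $A$ is the random-walk transition operator) gives $\|q_{j+1} - A q_j\|_1 \le m/K$, and since $A$ is an $L_1$ contraction a short induction yields $\|q_T - p_T\|_1 \le Tm/K \le \eps$, hence $|\rho - p_T(t)| \le \eps$.

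The main obstacle, and to me the most delicate step, is the reversal. I would prove by induction on $i$ decreasing from $K$ to $1$ that at the start of backward walk $i$ the counter state equals the state at the end of forward walk $i$, that backward walk $i$ retraces forward walk $i$'s path, and that it restores the counters to their state just before forward walk $i$. The key observation is that within a single walk the touched counters $c_{v_0,0}, c_{v_1,1}, \ldots, c_{v_{T-1},T-1}$ have distinct time indices and are therefore distinct counters, so ``decrement-then-read'' at step $j$ of the backward walk recovers the exact value the forward walk read at its step $j$; the edge choice then agrees and the inductive invariant on $j$ within a single walk propagates. Adding up both stages gives $O(KT)$ walk steps with $O(\log(mT/\eps))$ work per step, i.e.\ $\tO(mT^2/\eps)$ time, while workspace is dominated by $N$, $i$, and the current (vertex, time), all of size $O(\log(mT/\eps))$.
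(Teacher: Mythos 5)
Your approach is essentially the same as the paper's: you index registers directly by $(v,j)$ pairs, whereas the paper builds the layered acyclic graph $G'$ on vertex set $[T+1]\times V$ explicitly and then invokes its DAG-walk theorem (\cref{thm:randomDag}), but these are the same algorithm. The forward/backward walk structure, the round-robin fairness argument, and --- crucially --- the observation that a single walk never touches the same register twice because the time indices along the walk are distinct, all match the paper's \( \AlgWalkOnce \), \cref{lem:fair}, and \cref{lem:reverse}. Your phrasing of the accuracy argument via layer-wise distributions and $L_1$ contraction of $A$ is the layered-graph specialization of the paper's topological-cut induction in \cref{lem:accurate}.

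There is one small gap in the accuracy bound. You claim the fraction of walks from $v$ taking any fixed out-edge differs from its fair share by less than $1/K$. That would follow if the residue $c_{v,j} \bmod \outdeg(v)$ stepped through residues cyclically, but $c_{v,j}$ is maintained modulo $2^w$, not modulo $\outdeg(v)$, and during the $K$ increments it can wrap from $2^w - 1$ to $0$ at most once --- a jump that is in general not $+1 \pmod{\outdeg(v)}$. That single discontinuity lets two out-edge counts differ by up to $2$, so the per-edge deviation can be $2/K$, and your final bound becomes $2Tm/K \le 2\eps$ rather than $\eps$ with $K=\lceil Tm/\eps\rceil$. The paper handles exactly this in \cref{lem:fair}, accruing a factor of $2$, and compensates by taking $K = \KVALUE$ with $m$ the edge count of the layered graph, i.e.\ $K = \lceil 2mT/\eps\rceil$. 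Doubling your $K$ fixes the proof. Note that switching to counters maintained modulo $\outdeg(v)$ directly would avoid the wrap, but is not available here, since the catalytic tape's initial contents need not be a valid residue modulo $\outdeg(v)$; this is precisely why the algorithm works modulo a power of two.
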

Prior algorithms for random walks are primarily based on logspace derandomization techniques that are not practically efficient. For estimating sub-polynomial length walks to inverse sub-polynomial error, our algorithm runs in almost linear time $m^{1+o(1)}$. 

If \( G \) is guaranteed to be acyclic, the algorithm can be made to use less time and space by skipping a transformation that removes cycles (see \cref{thm:randomDag}).
Alternatively, na\"ively applying the algorithm for acyclic graphs on a graph with cycles produces an interesting result: the algorithm is no longer catalytic, but produces a walk with visit counts matching a true random walk's stationary distribution, after a number of steps which depends on the mixing time.
For details, see \cref{thm:stationary}.

\subsection{Our Technique: Connectivity}
We will give a catalytic algorithm that, given a graph $G=(V,E)$ on $n$ vertices, determines if there exists a path from $s$ to $t$. We virtually lift $G$ to a layered graph on $(n+1)$ layers with vertex set $\{0,\ldots,n\}\times V$, where we place an edge from $(i,u)$ to $(i+1,v)$ if $(u,v)\in E$.

Next, for every $v\in V$ and timestep $i\in \{0,\ldots,n\}$, allocate an $\ell=\tO(n)$ bit register on the catalytic tape, which we denote $R_{(i,v)}$, and interpret the register as a number in $\Z/2^\ell\Z$. Let the initial value of the register be $\tau_{(i,v)}$.

We define an \emphdef{edge push} as, for an edge $((i,u),(i+1,v))$ in the lifted graph, setting
\[
R_{(i+1,v)}\la R_{(i+1,v)}+R_{(i,u)}.
\]
For each layer $i=0,\ldots,n-1$ in sequence, we perform an edge push for every edge in the layer. 

Let $\alpha_{(n,t)}$ be the final value of $R_{(n,t)}$ after pushing along every edge. Note that we can easily revert the catalytic tape by performing a \emphdef{reverse edge push} on every edge in every layer, where we subtract the register instead of adding. Now consider incrementing $R_{(0,s)}$ (i.e.\ the register of the start vertex in the first layer) by $1$ and performing the same sequence of edge pushes; let the new final value of $R_{(n,t)}$ be $\alpha'_{(n,t)}$. We show via an inductive argument that $\alpha_{(n,t)}'-\alpha_{(n,t)}$ is exactly the number of length-$n$ paths from $s$ to $t$, modulo $2^\ell$ (and by adding a self-loop to $t$ we can assume that if an $s\ra t$ path exists, there exists an $s\ra t$ path of length $n$). By repeatedly executing the edge push sequence and comparing $\alpha_{(n,t)}$ to $\alpha'_{(n,t)}$ bit by bit, we determine whether their difference is nonzero --- equivalently, whether there exists an $s\ra t$ path.

\begin{figure}
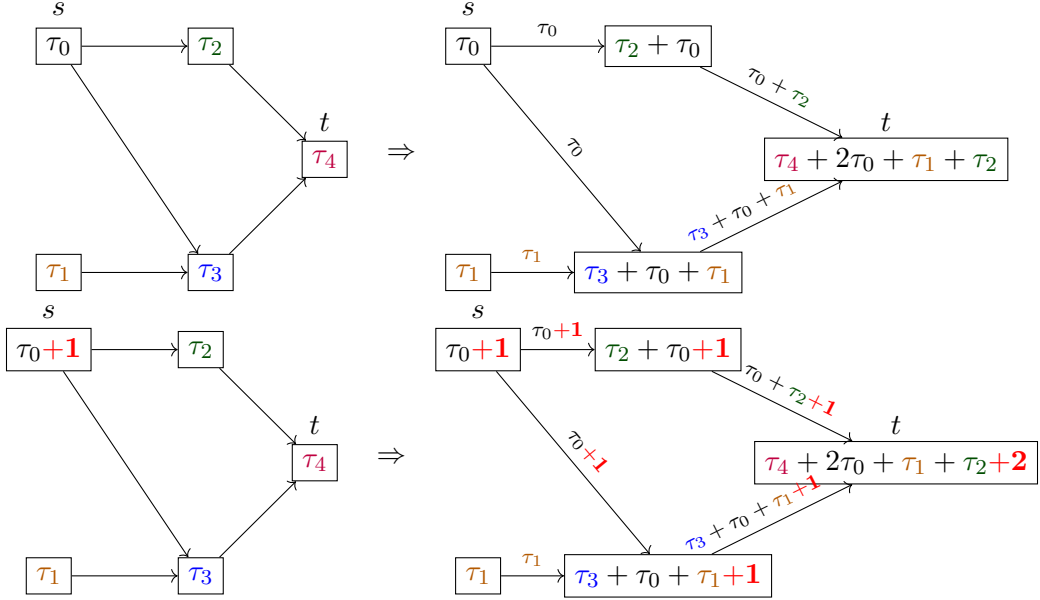

\centering
\STConnFigure{F}{F}
\nobreak\quad\nobreak\( \Rightarrow \)\nobreak\quad\nobreak
\STConnFigure{F}{T}

\STConnFigure{T}{F}
\nobreak\quad\nobreak\( \Rightarrow \)\nobreak\quad\nobreak
\STConnFigure{T}{T}
\caption{Pushing values along edges to detect whether a \( + 1 \) is propagated from \( s \) to \( t \).}
\end{figure}

\paragraph{Improving the runtime with randomness}
Unfortunately, since the number of paths from $s$ to $t$ can be exponential in $n$ and we count the number of paths mod the register size, our deterministic algorithm must take the register size $\ell$ to be $\Omega(n)$, so pushing a single edge takes linear time. Moreover, we must compute the entire sequence of pushes $\Omega(\ell / \log n)=\widetilde{\Omega}(n)$ times to compare $\alpha_{(n,t)}$ to $\alpha'_{(n,t)}$. To avoid this slowdown, we use randomness. Instead of working mod $2^{ \ell }$ for $\ell=\Omega(n)$, we pick a random small modulus $q$ and work mod $q$. If the number of paths is nonzero, with reasonable probability we will have
\[
\alpha_{(n,t)}-\alpha'_{(n,t)} \not\equiv 0 \pmod q
\]
from which the algorithm can infer that there is a path from \( s \) to \( t \).
In this way, we reduce the size of each register to $O(\log n)$ bits.

If \( q \) is not a power of two, not all strings of length $\ell$ will correspond to values mod $q$, so we must ensure that each register is \say{valid} before running the algorithm. This problem was  solved before by Buhrman et.\ al.~\cite{Catalytic}, but their approach is not fast enough for us.
Instead, we use a larger modulus \( dq \) chosen so that \( 2^{ \ell } - dq \) is small, and then draw a random shift $\beta \in [2^{ \ell }]$ (which we record on the worktape) and add \( \beta \) to every register.
With high probability over $\beta$, this ensures every register contains a value less than $dq$ (and otherwise we abort).

\paragraph{Local revertibility}
Next, we discuss how to achieve local revertibility. After performing all edge pushes onto a register $R_{a=(i+1,v)}$, the current value of this register is exactly its original value $\tau_{a}$ plus 
\[
\sum_{u:(u,v)\in E}R_{(i-1,u)}
\]
where $R_{(i-1,u)}$ is the current value of the register in the previous layer. Thus, if we receive a query for the initial value $\tau_a$ of this register, we can iterate over the in-neighbors of $v$, subtract off the register values of the previous layer, and thus return $\tau_a$. The time for this operation is essentially the product of the in-degree of $v$ and register size. Since we have already reduced the register size to $O(\log n)$, it suffices to ensure our graph has bounded in-degree, which we show we can do with a standard transformation (\cref{lem:reducedegree}).

\paragraph{Decreasing the number of registers}
Finally, as written we use $n^2$ registers, one per vertex and timestep. In fact, we can simply use two sets of registers, one for odd and one for even timesteps, and alternate. This saves a factor of $n$ in space, but breaks local revertibility. The version of our algorithm with this modification is \cref{thm:stnonzero}.

\subsection{Our Technique: Random Walks}

In this introduction, for simplicity, we will assume the graph \( G \) is acyclic and 2-out\-regular, with each vertex having a $0$ edge and $1$ edge.
We will determine the probability that a random walk from $s$ ends at $t$ with additive error at most $\eps$.

Allocate one bit of the catalytic tape for each vertex. For vertex $v$, denote this register $R_v$.
Run \( K=\KVALUE \) walks from $s$ as follows. At vertex $v$, we take the edge labeled with the current value of $R_v$, then set $R_v \gets 1 - R_v$.
In this way, if we examine walks reaching $v$ over the course of the \( K \) walks, the next edges taken are  \( 0, 1, 0, 1, \dotsc \) or \( 1, 0, 1, 0, \dotsc \).
The top row of \cref{fig:probsAndCounts} shows an example of this process, with the bits of the catalytic tape drawn directly on the corresponding vertices as \( \uparrow \) or \( \downarrow \).

Now we can argue that the number of visits to each vertex approximately equals the expected number of visits if we had done a truly random walk.
A common way to prove this kind of result is with a ``local consistency check'' --- see for example Nisan's Lemma~2.6~\cite{ROMultiple,CH22}.

The general idea is that if every vertex is given a pseudorandom bit of \( 0 \) approximately as many times as it is given a \( 1 \), then each vertex is visited approximately the right number of times.
Our version of this argument appears in the proof of \cref{lem:accurate}.
By a careful analysis we show that the number of visits to each vertex is within \( 2m \) of its expected value, regardless of the number of simulations. Thus after $K$ simulations, we obtain additive error at most $2m/K \le \eps$.

Finally, we must restore the catalytic tape.
This is done by running the ``reverse'' of our \( K \) simulations.
We do not literally walk in reverse; rather, we walk forward as before, but slightly change the way we supply its random bits, so that each ``reverse'' walk exactly undoes the effect of one normal walk.
For details, see \cref{alg:walkOnce} and \cref{cor:isCatalytic}.

\begin{figure}
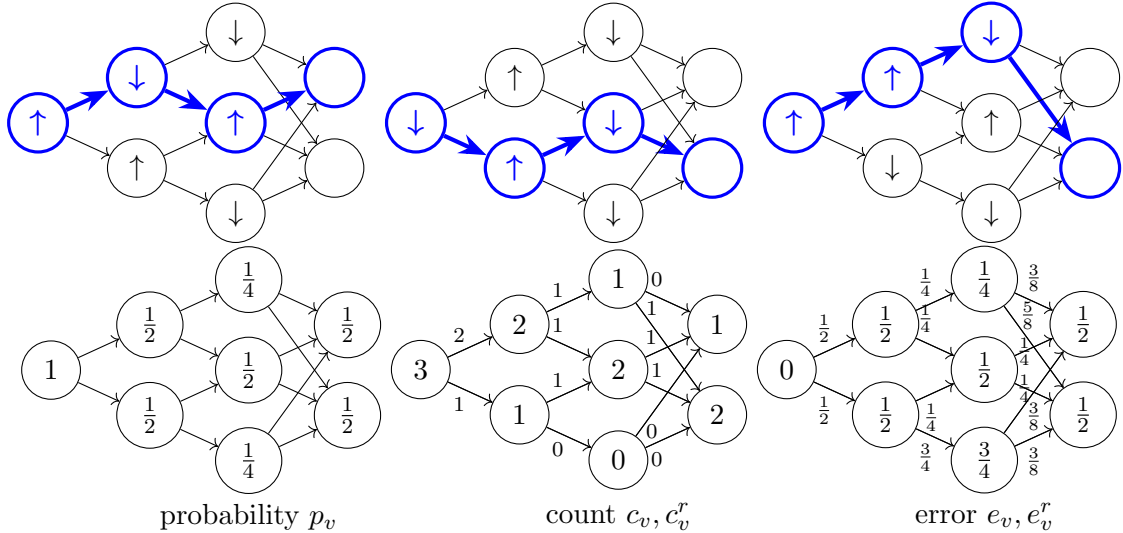

\centering
\input{gen_step0}
\input{gen_step1}
\input{gen_step2}

\begin{tikzpicture}
\input{gen_probs}
\path (22) +(0,-0.75) node {probability \( p_v \)};
\end{tikzpicture}
\begin{tikzpicture}
\input{gen_counts}
\path [font=\scriptsize,near start]
	(00) edge node [above] {2} (10)
	(00) edge node [below] {1} (11)
	(10) edge node [above] {1} (20)
	(10) edge node [above] {1} (21)
	(11) edge node [above] {1} (21)
	(11) edge node [below] {0} (22)
	(20) edge node [above] {0} (30)
	(20) edge node [above] {1} (31)
	(21) edge node [very near start,above] {1} (30)
	(21) edge node [above] {1} (31)
	(22) edge node [below] {0} (30)
	(22) edge node [below] {0} (31);
\path (22) +(0,-0.75) node {count \( \NumVisits_v, \NumVisits_v^r \)};
\end{tikzpicture}
\begin{tikzpicture}
\input{gen_errors}
\path [font=\scriptsize,near start]
	(00) edge node [above] {\( \frac12 \)} (10)
	(00) edge node [below] {\( \frac12 \)} (11)
	(10) edge node [above] {\( \frac14 \)} (20)
	(10) edge node [above] {\( \frac14 \)} (21)
	(11) edge node [pos=0.4,below] {\( \frac14 \)} (21)
	(11) edge node [below] {\( \frac34 \)} (22)
	(20) edge node [pos=0.5,above] {\( \frac38 \)} (30)
	(20) edge node [pos=0.4,above] {\( \frac58 \)} (31)
	(21) edge node {\( \frac14 \)} (30)
	(21) edge node {\( \frac14 \)} (31)
	(22) edge node [pos=0.5,below] {\( \frac38 \)} (30)
	(22) edge node [pos=0.5,below] {\( \frac38 \)} (31);
\path (22) +(0,-0.75) node {error \( e_v, e_v^r \)};
\end{tikzpicture}
\caption{\label{fig:probsAndCounts}%
	Top row: simulated random walks, using one bit of catalytic space per vertex (shown as \( \uparrow \) or \( \downarrow \)) to alternate between random choices.
	Bottom row: a comparison to a true random walk, illustrating the parts of \cref{def:probsAndCounts}:
	visit probabilities \( p_v \) (left), visit \( \NumVisits_v \) and transition \( \NumVisits_v^r \) counts (centre), and errors \( e_v = |\NumVisits_v - 3 p_v| \), \( e_v^r = |\NumVisits_v^r - \frac32 p_v| \) (right).}
\end{figure}

\subsection{Summary of Contributions}
From a complexity perspective, neither result places new problems in catalytic logspace. However, we view these algorithms as having two main advantages over prior work. First, the techniques are simple and clearly demonstrate the power of catalytic computation. Second, this simplicity allows us to give concrete bound on their resource usage (both catalytic space and time). We view \textit{algorithms} in catalytic space as worthy of further study; we have no reason to suspect our runtimes are optimal.

\section{Preliminaries}
We denote \( \N = \{ 0, 1, 2, \dotsc \} \) the natural numbers, and for \( n \in \N \) we denote \( [n] = \{ 0, 1, \dotsc, n-1 \} \) the natural numbers less than \( n \).

For a node \( v \) in a directed graph, \( \DIn(v) \) is the number of incoming and \( \DOut(v) \) is the number of outgoing edges.

We require a very weak bound on the number of paths in a graph:
\begin{fact}\label{fct:pathbound}
    For an $n$-vertex graph (possibly with self-loops), the number of length-$T$ paths between any two vertices is at most $P_n=n^T$.
\end{fact}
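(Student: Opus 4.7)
The plan is to give a direct counting argument. A length-$T$ walk (which the authors call a path) in a directed graph is determined by the sequence of vertices $v_0, v_1, \dotsc, v_T$, subject to the constraint that $(v_{i-1}, v_i)$ is an edge for each $i \in \{1,\dotsc,T\}$. Fixing the endpoints $v_0$ and $v_T$ amounts to choosing the $T-1$ intermediate vertices, so the number of such sequences is at most $n^{T-1}$; relaxing this to $n^T$ gives the claimed bound. The reason to prefer the weaker bound $n^T$ is presumably notational convenience elsewhere in the paper.

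In more detail, I would argue by induction on $T$ (or just by direct multiplication). For $T=0$ there is a single length-$0$ walk (the trivial one) between equal endpoints and none otherwise, so the bound $n^0 = 1$ holds. For the inductive step, any length-$T$ walk from $u$ to $v$ decomposes uniquely as a length-$(T-1)$ walk from $u$ to some intermediate vertex $w$, followed by a single edge from $w$ to $v$; there are at most $n$ choices of $w$, and by induction at most $n^{T-1}$ walks for each, giving at most $n^T$ in total. Self-loops pose no problem since they are just edges of the form $(v,v)$ and are counted correctly by this argument.

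The only minor subtlety, not really an obstacle, is the terminology: the word \emph{path} sometimes means a walk with no repeated vertices, in which case self-loops would be excluded. But the statement explicitly allows self-loops and the subsequent use of the fact (counting walks in the layered lift, where repetitions of the underlying vertex are expected) makes clear that \emph{walk} is the intended meaning; I would simply note this convention at the start of the proof.
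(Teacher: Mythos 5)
Your proof is correct, and since the paper states this as a \texttt{fact} without proof, there is nothing to compare it against: your direct counting argument (at most $n^{T-1}$ choices of intermediate vertices, relaxed to $n^T$) is exactly the reasoning the authors evidently had in mind when declaring it self-evident. The remark about the paper's use of ``path'' to mean ``walk'' is apt and worth keeping.
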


We assume basic familiarity with word-RAM and catalytic machines. For concreteness, we give a (not entirely formal) definition of catalytic RAM machines. 
\begin{definition}[Catalytic RAM Machine]\label{def:catrammach}
    We say $\mathcal{A}$ is a catalytic RAM machine that computes a function $f:\zo^*\ra\zo^*$ using $T(n)$ time, $S(n)$ workspace, and $W(n)$ catalytic space if it works as follows. The machine is given read-only access to $x$, read-write access to $S(|x|)$ bits of workspace, read-write access to a catalytic tape $R$ of length $W(|x|)$ in initial configuration $\tau$, and write-only access to an output tape.
    
    Furthermore, we allow the algorithm query access to the catalytic tape, in that it can read and write a specified bit in constant time after writing the index of this bit to a dedicated query tape. 
    For every $x$ and $\tau$, we have that the machine halts in at most $T(|x|)$ steps with $f(x)$ on the output tape and the catalytic tape restored to $\tau$.
\end{definition}
    Our definition is not powerful enough to allow analysis of runtimes without polylog factors from query overheads, and we do not attempt this.

\subsection{Catalytic Registers}\label{sec:catalyticRegisters}

It is often convenient for an algorithm to view its catalytic space as consisting of registers \( R_1, R_2, \dotsc \) for doing arithmetic over some modulus \( q \).
If \( q \) is a power of two, this is straightforward: allocate \( \log q \) bits per register.
However, our faster randomized connectivity algorithms (\cref{thm:mainNLrand,thm:mainNLrandrevert}) need to work with arbitrary moduli \( q \), which creates a difficulty: if we allocate \( \ell = \lceil \log q \rceil \) bits of the catalytic tape to each register, some registers may start with values outside the range \( 0, \dotsc, 2^{ \ell } - 1 \).

Buhrman, Cleve, Kouck\'y, Loff and Speelman~\cite{Catalytic} have a clever solution to this problem which unfortunately is too slow for our purposes.
Instead, we do the following.

Choose \( \ell \) to be a constant factor larger than \( \lceil \log q \rceil \), and treat the \( \ell \)-bit string as an element of $\Z/qd\Z$, where $d=d(q,\ell)$ is the largest value such that $qd\le 2^{ \ell }$.
We say $R$ is \emphdef{valid for (modulus) $q$} if its initial value is less than $qd$, and otherwise we say $R$ is \emphdef{invalid}.

Our algorithm ensures its registers are valid by applying a \emph{shift} to all registers simultaneously.
\begin{fact}\label{fct:shift}
    For an $ \ell $-bit register $R$ with initial configuration $\tau$, over a uniformly random shift $\beta \in [2^{ \ell }]$, $ ( \tau+\beta ) \bmod 2^{ \ell }$ is valid for $q$ with probability greater than $1-1/(d+1)$.
\end{fact}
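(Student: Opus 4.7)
The plan is to observe that the map $\beta \mapsto (\tau + \beta) \bmod 2^{\ell}$ is a bijection on $[2^{\ell}]$, so the distribution of the shifted register value is uniform on $[2^{\ell}]$ regardless of $\tau$. Hence the problem reduces to a simple counting argument.

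First I would count the number of ``good'' outcomes: the shifted register is valid for $q$ precisely when $(\tau + \beta) \bmod 2^{\ell} \in \{0, 1, \dotsc, qd - 1\}$, and by the bijection above there are exactly $qd$ values of $\beta$ achieving this. Therefore the probability of validity is exactly $qd / 2^{\ell}$.

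Next I would lower-bound $qd/2^\ell$ using the definition of $d$. By maximality, $d+1$ fails the inequality $q(d+1) \le 2^{\ell}$, i.e.\ $q(d+1) > 2^{\ell}$, so
\[
\frac{qd}{2^{\ell}} > \frac{qd}{q(d+1)} = \frac{d}{d+1} = 1 - \frac{1}{d+1},
\]
which is the claimed bound.

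There is no real obstacle; the only thing to be slightly careful about is that the bound is \emph{strict}, which comes from the strict inequality $q(d+1) > 2^{\ell}$ in the definition of $d$ (assuming $q \nmid 2^{\ell}$; if $q \mid 2^{\ell}$ one has $d = 2^{\ell}/q$ and every shift is valid, which also satisfies the bound strictly).
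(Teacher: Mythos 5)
Your proof is correct, and since the paper states this as a \emph{Fact} without an accompanying proof, your argument supplies exactly the expected reasoning: the shift makes the register value uniform on $[2^{\ell}]$, the valid values number exactly $qd$, and the maximality of $d$ gives $q(d+1) > 2^{\ell}$, hence $qd/2^{\ell} > d/(d+1) = 1 - 1/(d+1)$. One small note: the strict inequality $q(d+1) > 2^{\ell}$ follows from maximality of $d$ in all cases, so the parenthetical caveat about $q \mid 2^{\ell}$ is unnecessary (that case is simply $qd = 2^{\ell}$ and probability $1$, already covered by the same inequality).
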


Each valid register can be decomposed as having value \( aq+b \); to do arithmetic modulo \( q \), we leave \( a \) fixed and only change the \( b \) part.
It is easy to see these operations can be computed in simultaneous time $\tO( \ell )$ and space $O( \log \ell )$, given $q,d$ as input. For valid registers $R,R'$ over modulus $q$, we let $R\la R\pm R'$ to be this operation.

Now that we can implement registers over arbitrary moduli, the following straightforward lemma lets us use them to check whether counts over a much larger domain are nonzero:
\begin{lemma}\label{lem:divis}
    Let $V\le 2^P$ be arbitrary. For a uniformly random $r\in  [P^2]$, the probability that $V \equiv 0 \pmod r$ is at most $1-O(1/\log P)$.
\end{lemma}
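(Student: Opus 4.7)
The plan is a counting argument on primes. The first thing I would note is that we may assume $V \neq 0$; otherwise the claim is vacuous, and presumably the intended statement has this implicit assumption (or restricts to $V \ge 1$). So assume $1 \le V \le 2^P$, which means $V$ has at most $\log_2 V \le P$ distinct prime divisors (since each distinct prime factor contributes at least a factor of $2$).

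Next, I would lower-bound the probability that the random modulus $r \in [P^2]$ is simultaneously (i) prime and (ii) does not divide $V$, since for any such $r$ we have $V \not\equiv 0 \pmod r$. By the prime number theorem, the number of primes in $[P^2]$ is $\Theta(P^2 / \log P)$, so the fraction of $r \in [P^2]$ that are prime is $\Omega(1/\log P)$. Among these primes, at most $P$ of them divide $V$ (by the bound on distinct prime factors), contributing at most $P/P^2 = 1/P$ to the failure probability. Combining,
\[
\Pr[V \not\equiv 0 \pmod r] \;\ge\; \Pr[r \text{ is prime}] - \Pr[r \text{ is prime and } r \mid V] \;\ge\; \Omega\!\left(\frac{1}{\log P}\right) - \frac{1}{P},
\]
which is $\Omega(1/\log P)$ for $P$ sufficiently large (the small-$P$ case can be absorbed by adjusting the hidden constant). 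Rearranging gives $\Pr[V \equiv 0 \pmod r] \le 1 - \Omega(1/\log P)$, as claimed.

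The main ``obstacle,'' if any, is really just invoking a quantitative form of the prime number theorem --- or, to keep the proof entirely elementary, one can substitute Chebyshev-type bounds, which give $\pi(P^2) = \Omega(P^2/\log P)$ with an explicit constant and suffice for the $\Omega$-asymptotic statement. No other subtlety is needed, and in particular we do not have to control the total number of divisors of $V$ (which could be much larger than $P$); the prime-divisor bound is what saves us.
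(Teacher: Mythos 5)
Your proof is correct and takes essentially the same approach as the paper: bound the number of distinct prime factors of $V$ by $P$, invoke the Prime Number Theorem to count primes in $[P^2]$, and subtract off the primes dividing $V$. The only cosmetic differences are that you note the implicit $V \neq 0$ assumption and present the bound as a difference of probabilities rather than a single fraction.
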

\begin{proof}
    Note that $V$ can have at most $P$ distinct prime factors, and for every prime $q$ that is not one of these factors we have that $V \not\equiv 0 \pmod q$. Moreover, a random element in $[S]$ is prime with probability $\Omega(1/ \log S)$ by the Prime Number Theorem. Thus, if we consider the interval $[P^2]$, this interval contains at least $P^2/O(\log P)$ primes, of which at most $P$ divide $V$, so the probability that we draw a prime that does not divide $V$ is at least
    \[
    \frac{P^2/O(\log P)-P}{P^2} = \Omega(1/\log P) \ldotp \qedhere
    \]
\end{proof}

\subsection{Input Representation}\label{sec:InputRepr}
\newcommand{\vd}{\vec{d}}
\newcommand{\vs}{\vec{s}}

Because our catalytic algorithms do not have the space to perform otherwise standard transformations on the graph (for instance, producing an adjacency list given an adjacency matrix), we must be careful with how they access the input. We adopt the model of oracle access, which is common in sublinear and local models.
We say we have oracle access to a graph \( G \) if its vertex set is \( [n] \) for some \( n \in \N \), and for any \( v \in [n] \) we can make the following queries:
\begin{itemize}
    \item $\indeg_G(v)$ (resp.\ $\outdeg_G(v)$) returns the in-degree \( \DIn(v) \) (resp.\ out-degree \( \DOut(v) \)).
    \item $\innbr_G(v,i)$ (resp.\ $\outnbr_G(v,i)$) returns the $i$th in-neighbor (resp.\ out-neighbor) of $v$, or $\perp$ if this does not exist.
\end{itemize}
Our connectivity algorithms only use in-edge access to the graph, and our random walk algorithms only use out-edge access.

For the locally revertible connectivity algorithm (\cref{thm:mainNLrandrevert}), we use that given a graph, we can provide oracle access to a modified graph with bounded in-degree:
\begin{lemma}\label{lem:reducedegree}
    Given oracle access to a directed graph $G$ with $n$ vertices and $m$ edges, where every vertex has in-degree at most \( n \), there is a simulation of an oracle for a graph $G'$ on $n'=O(n^2)$ vertices with the following properties:
    \begin{itemize}
        \item The simulation can answer queries in $O(\log n)$ space and $O(\log n)$ time, with the exception of \( \outnbr_{ G' } \) queries.
        \item The maximum in-degree is $2$.
        \item The diameter is $\tO(n)$.
        \item For vertices $s,t \in [n]$, there is an $s\ra t$ path in $G$ if and only if there is an \( s \ra t \) path in \( G' \). (We have \( [n] \subseteq [n'] \), so integers \(s,t\) representing vertices of \( G \) also represent vertices of \( G' \).)
    \end{itemize}
    All but \( O( m ) \) of the nodes of \( G' \) are isolated (no in- or out-edges), and there is an algorithm to list all non-isolated nodes in \( O(\log n) \) space and \( \tO(m+n) \) time.
\end{lemma}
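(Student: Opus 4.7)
I would build $G'$ by replacing each vertex $v$ of $G$ with a balanced binary ``in-gathering'' tree $T_v$ whose root is $v$ itself and which has $d_v = \DIn(v)$ leaves. Inside $T_v$, every non-root has out-degree $1$ (its parent edge) and every non-leaf has in-degree $2$ (its two children). The $i$-th leaf of $T_v$ is the recipient of the edge from $\innbr_G(v,i)$, so the original edge $(u,v)$ of $G$ becomes an edge in $G'$ from the root of $T_u$ into the corresponding leaf of $T_v$. Each tree with $d_v \ge 1$ contains exactly $2 d_v - 1$ nodes and has depth at most $\lceil \log_2 n \rceil$. A path of length $L$ in $G$ lifts to one of length at most $L (\lceil \log_2 n \rceil + 1)$ in $G'$, so the diameter of $G'$ is $\tO(n)$; contracting each $T_v$ back to $v$ recovers $G$, so $s \to t$ connectivity is preserved; and the maximum in-degree of $G'$ is $2$ by construction.

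For the indexing, since $d_v \le n$ each tree has at most $2n-1$ nodes, so I use the index $v$ for the root of $T_v$ and allocate a contiguous block of $2n - 2$ additional indices starting at $n + v(2n-2)$ for the non-root nodes of $T_v$, giving $n' = n + n(2n-2) = O(n^2)$ and ensuring $[n] \subseteq [n']$. From any index $i \in [n']$ I can determine in $O(\log n)$ arithmetic operations (and one $\indeg_G$ query to read $d_v$) whether $i$ is the root of some $T_v$, an internal node, a leaf, or an unused slot inside $v$'s block; child and leaf positions are computed via a heap-style numbering of $T_v$. Then $\indeg_{G'}$, $\outdeg_{G'}$, and $\innbr_{G'}$ each reduce to a constant number of such arithmetic steps together with at most one $\innbr_G$ or $\outdeg_G$ query on $G$: in particular $\indeg_{G'}$ is $1$ or $2$ (and $\innbr_{G'}$ at a leaf returns $\innbr_G(v, \ell)$ for the leaf's index $\ell$), while $\outdeg_{G'}$ is $1$ for every non-root and equals $\outdeg_G(v)$ at the root. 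The only query that must be slow is $\outnbr_{G'}$ at a root $v$: to find the target leaf of the $k$-th out-edge I query $w = \outnbr_G(v,k)$ and then linearly scan $\innbr_G(w, \cdot)$ to locate $v$'s position, which the lemma explicitly permits.

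Summing $2 d_v - 1$ over $v$ with $d_v \ge 1$ bounds the non-root tree nodes by $2m - n$, and any vertex of $G$ with $d_v = 0$ but $\outdeg_G(v) > 0$ contributes only itself, so the total number of non-isolated nodes of $G'$ is $O(m)$. To enumerate them in $\tO(m+n)$ time and $O(\log n)$ space, I iterate $v = 0, \dotsc, n-1$, query $\indeg_G(v)$ and $\outdeg_G(v)$, emit $v$ if it is non-isolated, and then emit the $\max(0, 2d_v - 2)$ reserved indices of $T_v$'s non-root nodes. The main obstacle is purely bookkeeping: fixing an index layout so that every structural test about $T_v$ (``is $i$ a leaf, and which leaf?'') reduces to a constant number of $G$-queries plus $O(\log n)$ scratch bits, and verifying that the single linear scan hidden inside $\outnbr_{G'}$ is the only unavoidable non-logarithmic step.
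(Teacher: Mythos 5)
Your construction is essentially the same as the paper's: replace each vertex $v$ by a binary in-gathering tree rooted at $v$, route the original edge $(u,v)$ from the root of $u$'s tree into $v$'s tree, and observe that only $\outnbr_{G'}$ requires an expensive linear scan. The one cosmetic difference is that you allocate a dedicated leaf node of $T_v$ for each in-edge, whereas the paper attaches the roots $(u_i,0)$ directly as the tree's leaves (so $T_v$ has only $\indeg_G(v)-1$ nodes of its own); this changes constants and indexing details but not the argument, and your claims (in-degree $\le 2$, diameter $\tO(n)$, connectivity preserved by contraction, $O(m)$ non-isolated nodes, $\tO(m+n)$-time enumeration, $O(\log n)$ queries except $\outnbr_{G'}$) are all correct.
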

\begin{proof}
    We wish to replace each vertex of \( G \) with a binary tree, with edges pointing toward the root.
    An edge from \( u \) to \( v \) will be replaced with an edge from the root of \( u \) to one of the leaves of \( v \).

    Suppose a node \( v \in [n] \) has in-degree at least two, and its in-edges are from nodes \( u_1, \allowbreak \dotsc, \allowbreak u_{ \indeg_G(v) } \).
    Then \( v \) is represented by vertices in \( G' \) which we will call \( (v,0), \allowbreak (v,1), \dotsc, \allowbreak (v, \indeg_G(v)-2) \).
    We create edges to turn these nodes into a binary tree, and add \( ( u_1, 0 ), \allowbreak \dotsc, \allowbreak ( u_{\indeg_G(v)}, 0 ) \) as leaves, using heap indexing as follows.
    Every node \( (v,i) \) has in-degree two.
    The first in-edge of \( (v,i) \) comes either from \( (v,2i+1) \), or from the ``leaf node'' \( (u_{ 2i+3-\indeg_G(v) }, 0) \) if \( 2i+1 \ge \indeg_G(v)-1 \).
    Similarly, the second in-edge comes either from \( (v,2i+2) \) or from \( (u_{ 2i+4-\indeg_G(v) }, 0) \).
    See \cref{fig:DegreeTwoTransform}.

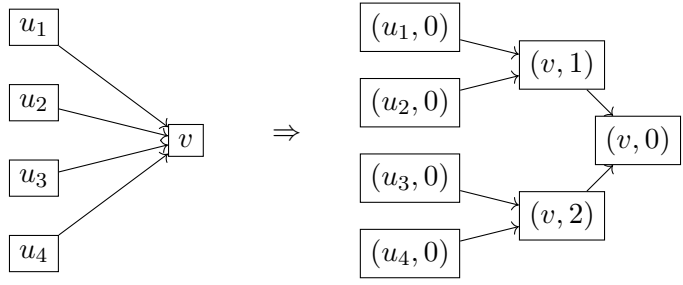
\begin{figure}
\centering
\begin{tikzpicture}[v/.style={draw},baseline=(v)]
\path
    +(0,0) node [v] (u1) {\( u_1 \)}
    +(0,-1) node [v] (u2) {\( u_2 \)}
    +(0,-2) node [v] (u3) {\( u_3 \)}
    +(0,-3) node [v] (u4) {\( u_4 \)}
    +(2,-1.5) node [v] (v) {\( v \)}
    (u1) edge [->] (v)
    (u2) edge [->] (v)
    (u3) edge [->] (v)
    (u4) edge [->] (v);
\end{tikzpicture}
\qquad\( \Rightarrow \)\qquad
\begin{tikzpicture}[v/.style={draw},baseline=(v0)]
\path
    +(0,0) node [v] (u1) {\( ( u_1, 0 ) \)}
    +(0,-1) node [v] (u2) {\( ( u_2, 0 ) \)}
    +(0,-2) node [v] (u3) {\( ( u_3, 0 ) \)}
    +(0,-3) node [v] (u4) {\( ( u_4, 0 ) \)}
    +(3,-1.5) node [v] (v0) {\( ( v, 0 ) \)}
    +(2,-0.5) node [v] (v1) {\( ( v, 1 ) \)}
    +(2,-2.5) node [v] (v2) {\( ( v, 2 ) \)}
    (u1) edge [->] (v1)
    (u2) edge [->] (v1)
    (u3) edge [->] (v2)
    (u4) edge [->] (v2)
    (v1) edge [->] (v0)
    (v2) edge [->] (v0);
\end{tikzpicture}
\caption{\label{fig:DegreeTwoTransform}Converting a graph to an equivalent one with in-degree bounded by \( 2 \).}
\end{figure}

    If \( v \) has in-degree less than two, it is represented by a single node \( (v,0) \).

    Our notion of oracle access requires vertices of \( G' \) to be represented as integers in some range \( [n'] \).
    We set \( n' = n(n-1) \).
    Every integer \( v' \in [n'] \) can be written as \( v' = v + ni \) for a unique vertex \( v \in [n] \) and index \( i \in [n-1] \).
    Define \( \varphi(v,n) := v+ni \).
    We identify \( v' \) with the vertex \( (v,i) \) described above, or, if \( i \) is too big (greater than \( \min \{ \indeg_G(v)-2, 0 \} \)), then \( v' \) is an isolated vertex with no in- or out-edges.
    (This means \( G' \) has more vertices than strictly necessary, but it makes it easy to interpret indices in \( [n'] \) as nodes.)

    We now verify we can provide oracle access to this new graph.
    All of the oracle routines \( \indeg_{ G' }, \outdeg_{ G' }, \innbr_{ G' }, \outnbr_{ G' } \) first decompose their input as \( v' = \varphi(v,i) \).
    They return \( 0 \) or \( \perp \) if \( i > \min \{ \indeg_G(v)-2,0 \} \).
    Otherwise:
    \begin{itemize}
        \item \( \indeg_{ G' }( \varphi(v,i) ) \) is is \( 2 \), unless \( \indeg_G(v) < 2 \) in which case \( \indeg_{ G' }( (v,1) ) = \indeg_G(v) \).
        \item $\outdeg_{G'}( (v,0) ) = \outdeg_G(v)$.
        \item For \( i>0 \), $\outdeg_{G'}( \varphi(v,i) ) = 1$.
        \item To compute \( \innbr_{ G' }( \varphi(v,i), j ) \), first compute \( d = \indeg_G(v) \), and return \( \perp \) if \( j > d \). Otherwise, let \( k = 2i+1+j \). If \( k \ge d \), return \( \varphi(u,0) \) where \( u = \innbr_G( v, k+2-\indeg_G(v) ) \); otherwise, return \( \varphi(v,k) \).
         
        \item One could implement $\outnbr_{G'}( \varphi(v,i), j )$ by using \( \indeg_{G'} \) and \( \innbr_{G'} \) to exhaustively find all edges out of \( (v,i) \), and then return the \( j \)-th in lexicographic order. (We do not actually use this operation on the graph \( G' \).)
        
    \end{itemize}
    Now, we verify $G'$ has the desired properties.
    The maximum in-degree is immediate from the construction.
    For every edge \( u,v \) in \( G \), there is a length-\( O( \log \indeg_G( v ) ) \) path from \( \varphi(u,0) \) to \( \varphi(v,0) \) in \( G' \), since in the tree representing node \( v \), every node has a logarithmic-length path to the root \( (v,0) \).
    It follows that if there is a path from \( s \) to \( t \) in \( G \), there is a path from \( \varphi( s, 0 ) = s \) to \( \varphi( t, 0 ) = t \) in \( G' \).
    Conversely, if we project every vertex \( (v,i) \) to the corresponding vertex \( v \) in \( G \), then every edge in \( G' \) is projected either to a self-loop or an edge that exists in \( G \), and so an \( s \to t \) path in \( G' \) implies an \( s \to t \) path in \( G \).
    To see that the diameter is \( O( n \log n ) \), observe that every path \( p \) in \( G' \) consists of two kinds of two kinds of edges: edges \( (v,i) \to (v,i') \) within the same tree, and edges \( (v,0) \to (v',i) \) between trees. Since the trees have depth \( O( \log n ) \), there can never be more than that many edges within trees in a row, and so \( p \) corresponds to a path a factor of at most \( O( \log n ) \) shorter in \( G \).
    Since the diameter of \( G \) is \( O( n ) \), the diameter of \( G' \) is \( O( n \log n ) \).

    Finally, all but \( O(m) \) nodes in \( G' \) are isolated, since \( G' \) has \( O(m) \) edges.
    These can be enumerated efficiently by considering the nodes \( v \in [n] \) one at a time.
\end{proof}

\section{Deciding Graph Connectivity}\label{sec:resultNL}

We state our first algorithm, which requires one register per vertex and timestep (but permits local revertibility).

\newcommand{\len}{T}
\begin{theorem}\label{thm:stcount}
    There is a catalytic logspace algorithm that, given $\ell,q,\len\in \N$ and a graph $G$ with $n$ vertices and $m$ edges and vertices $s,t$, and $\ell$-bit registers $R_{i,u}$ for $i\in \{0,\ldots,\len\}$ and $u\in V$ that are valid for modulus $q$ (\cref{sec:catalyticRegisters}), returns
    \[
    ( \# \text{of $s\ra t$ paths of length $\len$} ) \bmod q \ldotp
    \]
    Moreover, the algorithm runs in time $\tO(\ell^2\cdot  \len \cdot (m+n))$, and is locally revertible in time $\tO(\ell\cdot d_{\max})$, where $d_{\max}$ is the maximum in-degree of $G$.
\end{theorem}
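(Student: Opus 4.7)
The plan is to formalize the push procedure sketched in the introduction. Viewing the catalytic tape as registers $R_{(i,u)}$ indexed by layer $i \in \{0, \dotsc, T\}$ and vertex $u \in V$, define a \emph{forward pass} as: for $i = 0, 1, \dotsc, T-1$ in order, and for each edge $(u,v) \in E$ in some fixed order within the layer, perform $R_{(i+1, v)} \gets R_{(i+1, v)} + R_{(i, u)}$ modulo $q$ using the arithmetic of \cref{sec:catalyticRegisters}. A \emph{reverse pass} runs these updates in reverse order with subtraction, exactly undoing a forward pass. The algorithm then performs: (i) a forward pass, extracting $\alpha := R_{(T,t)}$; (ii) a reverse pass; (iii) $R_{(0,s)} \gets R_{(0,s)} + 1$; (iv) another forward pass, extracting $\alpha'$; (v) a reverse pass; (vi) $R_{(0,s)} \gets R_{(0,s)} - 1$; (vii) output $(\alpha' - \alpha) \bmod q$. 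Steps (ii), (v), and (vi) restore the tape exactly.

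Correctness reduces to a single induction on $i$: after a forward pass has processed layers $0, \dotsc, i-1$,
\[
R_{(i, v)} \equiv \sum_{j=0}^{i} \sum_{w \in V} P(w \to v, i - j) \cdot \tau_{(j, w)} \pmod q,
\]
where $P(w \to v, k)$ counts length-$k$ walks from $w$ to $v$ (so $P(w \to v, 0) = [w = v]$). The inductive step holds because pushing layer $i$ increments $R_{(i+1,v)}$ by $\sum_{u : (u,v) \in E} R_{(i, u)}$, and prepending an edge $(u,v)$ to a length-$k$ walk from $w$ to $u$ produces exactly the length-$(k+1)$ walks from $w$ to $v$. Setting $i = T$, $v = t$ and observing that steps (i) and (iv) differ only by a $+1$ to $\tau_{(0,s)}$ gives $\alpha' - \alpha \equiv P(s \to t, T) \pmod q$. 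Each of the four passes performs $O(mT)$ register operations of cost $\tO(\ell)$; with an additional $\tO(\ell)$ factor for emitting the result in chunks (repeating passes if the workspace cannot hold the entire $\ell$-bit answer) and $O(n)$ overhead for enumerating vertices through the oracle, the runtime is $\tO(\ell^2 \cdot T \cdot (m+n))$.

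For local revertibility, the algorithm keeps in $O(\log(nT))$ workspace a pointer to the active pass, the current layer $i^*$, and the edge within it. This uniquely determines, for each register $R_{(j,u)}$, the set $S_{(j,u)}$ of edge pushes into $(j,u)$ that have been completed but not yet undone, and hence
\[
\tau_{(j,u)} = R_{(j,u)} - \sum_{(w,u) \in S_{(j,u)}} R_{(j-1,w)}.
\]
Whenever $S_{(j,u)}$ is nonempty, layer $j-1$ is fully processed and its registers hold their push-time values, so each $R_{(j-1,w)}$ can be read directly. Summing over the at most $d_{\max}$ in-neighbors of $u$ at $\tO(\ell)$ arithmetic per neighbour yields the $\tO(\ell \cdot d_{\max})$ bound. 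The main obstacle I anticipate is making this case analysis clean across all four passes and across intermediate positions, especially at the transitional layer $j = i^* + 1$ (where pushes may be only partially complete) and when accounting for whether the $+1$ to $R_{(0,s)}$ from step (iii) is currently in effect; the remaining ingredients---the path-counting induction and the arithmetic bookkeeping---are routine.
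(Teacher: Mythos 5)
Your proposal takes essentially the same approach as the paper's proof: lift to a layered graph, push register values forward along edges layer by layer, run once with and once without a $+1$ at $(0,s)$, observe by linearity (your explicit formula $R_{(i,v)} \equiv \sum_{j,w} P(w\to v,i-j)\tau_{(j,w)}$ being a slightly stronger form of the paper's Lemma~3.3 induction) that the difference at $(T,t)$ is the path count mod $q$, reverse to restore the tape, and repeat $\tO(\ell)$ times to emit the answer in chunks. Your local-revertibility argument via the completed-push set $S_{(j,u)}$ is in fact slightly more careful than the paper's Lemma~3.6, which only explicitly treats the cases where a register is untouched or a full push sequence has run, glossing over the partially-pushed transitional layer that you rightly flag as the delicate case.
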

\begin{proof}
    We first implicitly lift $G = (V,E)$ to a layered graph on $(n+1)n$ vertices, defined as follows. We let the vertex set be $\{0,\ldots,n\}\times V$, and the edge set be
    \[E' = \{((i,v),(j,u)):j=i+1 \text{ and } (v,u)\in E\}.
    \]
    For every vertex $a=(i,v)$, let $R_a$ be the corresponding register with initial value $\tau_a$.

    We then describe the basic algorithm. For an edge $a=(i,u),b=(i+1,v)$, we let an \emphdef{edge push (resp.\ reverse edge push)} be the update where we set 
    \[R_b\la R_b+R_a\quad  \text{(resp.\ }R_b\la R_b-R_a\text{)}\] 
    where the arithmetic operations are defined as in \cref{sec:catalyticRegisters}.
    
    We let $\lp_i$ (resp.\ $\rp_i$) be the operation where we perform an edge push (resp.\ reverse edge push) on every edge from layer $i$ to layer $i+1$. We do this by iterating over vertices $v$, and using $\innbr_G(v)$ oracle queries to determine the in-neighbors of $v$, and then pushing along these edges. 
    
    Finally, let $\is(b)$ be the operation where we set
    \[
    R_{(0,s)} \la R_{(0,s)}+b.
    \]
    For each $b$, we define the push and reverse sequence
    \[
    \cP_b=(\is(b),\lp_0,\ldots,\lp_{\len-1})
    \]
    \[
    \cR_b = (\rp_{\len-1},\ldots,\rp_0,\is(-b)).
    \]
    First, note that the catalytic tape is easy to reset:
    \begin{claim}\label{clm:reset}
        For every value $b$ and register $R_a$, after executing $(\cP_b,\cR_b)$ we have that $R_a=\tau_a$.
    \end{claim}
    \begin{proof}
        It clearly suffices to prove that $\lp_{i},\rp_{i}$ preserve the tape configuration. This follows directly from the linearity of addition and the definition of both operations.
    \end{proof}
    Finally, the difference in the final values at a register with $b=\{0,1\}$ is exactly the number of paths from $s$ to this register:
    \begin{lemma}\label{lem:pushcount}
        For every $i,v,b$, let $\alpha_{(i,v),b}$ be the value of $R_{(i,v)}$ after $\cP_b$. Then $ \alpha_{(i,v),1}-\alpha_{(i,v),0}$ is exactly the number of length $i$ $s\ra v$ paths in $G$, modulo $q$.
    \end{lemma}
    \begin{proof}
        Note that $\alpha_{(i,v),b}$ is the value of $R_{(i,v)}$ after
        \[
        \is(b),\lp_0,\ldots,\lp_{i-1}
        \]
        since subsequent operations in $\cP_b$ do not write to $R_{(i,v)}$. 
    
        Suppose the claim holds for every vertex in layer $i-1$. Next, fix an arbitrary vertex $w=(i,v)$, and let $(u_1,\ldots,u_r)\subseteq [V]$ be the in-neighbors of $v$ in $G$. Note that every length-$i$ path from $s$ to $v$ decomposes as
        \[
        (s,\ldots,u_j)(u_j,v)
        \]
        for some unique $u_j$, so the paths are in one-to-one correspondence with length $i-1$ paths from $s$ to $u_j$ for some $j$.
    
        Finally, recall that $w$ has in-neighbors
        \[a_1=(i-1,u_1),\ldots,a_r=(i-1,u_r).\]
        and observe that
        \[
        \alpha_{w,b} \equiv \tau_w + \sum_{j\in [r]}\alpha_{a_j,b} \pmod{q}
        \]
        and hence
        \[
        \alpha_{w,1}-\alpha_{w,0} \equiv \sum_{j\in [r]}(\alpha_{a_j,1}-\alpha_{a_j,0}) \pmod{q}
        \]
        and so by the inductive hypothesis we are done.
    \end{proof}
    Then the final algorithm is straightforward. We determine and print $\alpha_{(n,t),1}-\alpha_{(n,t),0} \pmod{q}$. We compute this value by comparing the $\ell$-bit registers bit by bit, each time using the sequence $\cP_b,\cR_b$ with alternating values of $b$. 

    \begin{lemma}
        The algorithm runs in $\tO(\ell^2\len \cdot (n+m))$ time.
    \end{lemma}
    \begin{proof}
        Given an edge $e$ and layer $i$, pushing along this edge takes time $\tO(\ell)$, and hence a layer push takes time $\tO((m+n)\cdot \ell)$. Therefore executing $\cP_b$ and $\cR_b$ takes time $\tO(\len \cdot (n+m) \ell)$. Finally, we invoke both routines $\ell$ times to compute the final value, so the total runtime is as claimed. 
    \end{proof}

    Finally, we show how the algorithm is locally revertible.
    \begin{lemma}\label{lem:STConnLR}
        The algorithm is locally revertible in time $\tO(\ell\cdot d_{\max})$.
    \end{lemma}
    \begin{proof}
        Suppose we receive a query to return $\tau_a$, the initial value of the register $R_{a=(i,v)}$. Let $u_1,\ldots,u_r$ be the in-neighbors of $v$, which we can enumerate over using the oracle for $G$. The register $R_a$ is either in its initial state (in which case we can return its current value without modifying the tape), or some push sequence $\cP_b$ has been executed. In that case, the current value of $R_a$ is
        \begin{align*}
            R_a &= \alpha_{a,b}\\
            &= \tau_a + \sum_{j\in [r]}\alpha_{(i-1,u_j),b}\\
            &= \tau_a + \sum_{j\in [r]}R_{(i-1,u_j)}
        \end{align*}
        where the second equality follows from the definition of $\cP_b$, and the third follows from the fact that we do not modify registers in layer $i-1$ before reverting layer $i$ to the original register configuration. Thus, we can recover $\tau_a$ by enumerating over the in-neighbors of $v$ and computing
        \[
        R_a - \sum_{j\in [r]}R_{(i-1,u_j)} = \tau_a.
        \]
        Afterwards, we revert $R_a$ to $\alpha_{a,b}$ and continue execution as before. The time for the query is bounded by $\tO(\ell\cdot r)=\tO(\ell\cdot d_{\max})$ as claimed. 
    \end{proof}
    This completes the proof of the desired properties.
\end{proof}

Next, we present algorithm version that avoids lifting the graph, at the cost of not permitting local revertibility. In this case, we do not compute the number of $s\ra t$ paths mod the register size, simply a number that is nonzero if a path exists.
\begin{theorem}\label{thm:stnonzero}
    For every $\len\in \N$ and graph $G$ with $n$ vertices and $m$ edges and vertices $s,t$, there is $\zeta_{G,s,t}\in \N$ where:
    \begin{itemize}
        \item $\zeta_{G,s,t}\le (n+1)^\len$, and
        \item $\zeta_{G,s,t}$ is nonzero if and only if there is an $s\ra t$ path in $G$ of length at most $\len$.
    \end{itemize}
    Moreover, there is a catalytic logspace algorithm that, given $\ell,q,\len\in \N$ and a graph $G$ and vertices $s,t$, and $\ell$-bit registers $R_{\sigma,v}$ for $\sigma\in \zo$ and $v\in V$ that are valid for modulus $q$, returns 
    \[
    \zeta_{G,s,t} \bmod q \ldotp
    \]
    Moreover, the algorithm runs in time $\tO(\ell^2\cdot  \len \cdot (n+m))$.
\end{theorem}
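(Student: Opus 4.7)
The plan is to adapt the structure of \cref{thm:stcount}, but with only two ``layers'' of registers that alternate as source and target of the push. Define $\lp_i$ for $i \in \{0, \dots, \len-1\}$ to push from layer $i \bmod 2$ to layer $(i+1) \bmod 2$ along each edge of \( G \); because the source layer is not reset between consecutive pushes, the final value at $R_{\len \bmod 2, t}$ is no longer simply the number of length-$\len$ walks in $G$, but a weighted count of walks of various lengths $k \le \len$. I will take this count as the definition of $\zeta_{G,s,t}$, working implicitly in $G' := G \cup \{(t,t)\}$ (adding a self-loop at $t$ if one is not already present, which can be simulated at the oracle level) to avoid a parity obstruction.

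Concretely, define $\zeta_{G,s,t}$ as the number of length-$\len$ walks, starting at $(0,s)$ and ending at $(\len \bmod 2, t)$, in a time-dependent layered graph $H$ on $\{0,1\} \times V$: at step $i \in \{1,\ldots,\len\}$ the walker at $(\sigma,u)$ may either \emph{stay} at $(\sigma,u)$ (always allowed) or \emph{move} to $(1-\sigma,v)$ via an edge $(u,v) \in E(G')$ (allowed only when $\sigma \equiv i-1 \pmod 2$). The bound $\zeta \le (n+1)^\len$ is immediate from counting options at each step: $1$ stay plus at most $\outdeg_{G'}(u) \le n$ moves. For the nonzero-iff-path property, an $s \to t$ walk in $G$ of length $k \le \len$ yields a valid length-$\len$ walk in $H$ by taking moves for the $k$ path edges, using one additional move via the self-loop at $t$ to fix parity if $k \not\equiv \len \pmod 2$, and padding with stays. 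Conversely, any walk in $H$ projects (by discarding stays) onto a walk in $G'$ of length at most $\len$; since the only non-$G$ edge in $G'$ is the self-loop at $t$, which after arrival at $t$ just corresponds to remaining at $t$, this yields an $s \to t$ walk in $G$ of length at most $\len$.

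For the algorithm, define $\cP_b = (\is(b), \lp_0, \dots, \lp_{\len-1})$ and $\cR_b$ (the reverse) analogously to \cref{thm:stcount}, and write $\alpha_b$ for the value of $R_{\len \bmod 2, t}$ after executing $\cP_b$ from the initial tape. Tape restoration after $(\cP_b, \cR_b)$ follows from the linearity of each $\lp_i$ (the source-layer registers are untouched during a single layer push, so reverse pushes undo forward pushes regardless of edge order). Then by induction on $i$, tracking $\beta^{(i)}_{\sigma,v}$, the excess in $R_{\sigma,v}$ due to $\is(1)$ after $i$ pushes, one verifies the recursion
\[
\beta^{(i)}_{\sigma,v} = \beta^{(i-1)}_{\sigma,v} + [\sigma \equiv i \bmod 2] \sum_{(u,v) \in E(G')} \beta^{(i-1)}_{1-\sigma, u},
\]
which exactly matches the recursion for counting $i$-step walks in $H$. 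Thus $\beta^{(\len)}_{\len \bmod 2, t} = \zeta_{G,s,t}$, so $\alpha_1 - \alpha_0 \equiv \zeta \pmod q$, and a bit-by-bit comparison as in \cref{thm:stcount} yields $\zeta \bmod q$. Each layer push costs $\tO((n+m)\ell)$, giving $\tO(\len (n+m) \ell)$ per sequence and $\tO(\ell^2 \len (n+m))$ for all $\ell$ bits.

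The main obstacle is identifying the correct combinatorial interpretation of what the two-layer algorithm computes: unlike the $n^2$-register version, $\zeta$ is not simply a walk count in $G$ itself, and the insight that the alternating push corresponds to walks in a time-dependent layered graph $H$ (together with the self-loop-at-$t$ trick to handle the parity constraint on move positions) is the key step. Once this interpretation is in hand, both the $(n+1)^\len$ bound and the reduction to reachability in $G$ fall out by routine arguments, and the algorithm's correctness follows by an induction nearly identical to \cref{lem:pushcount}.
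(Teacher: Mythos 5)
Your proposal is correct and takes essentially the same approach as the paper's: two alternating register layers indexed by a parity bit, linear layer pushes, and extracting $\zeta \bmod q$ from the difference in the target register after $\is(1)$ versus $\is(0)$, with the same push/reverse/bit-by-bit structure carried over from \cref{thm:stcount}. The only differences are cosmetic: the paper inserts dummy self-loops at \emph{every} vertex and defines $\zeta$ implicitly as whatever the register recursion computes, whereas you add a self-loop only at $t$ and supply a clean explicit combinatorial reading of $\zeta$ as the number of length-$\len$ walks in a time-varying two-layer graph $H$ (stay transitions encoding the carry-over from two phases earlier, move transitions gated by step parity) — a nice framing but not a different argument.
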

\begin{proof}
    We describe the basic algorithm, which is like that of \cref{thm:stcount} but uses fewer registers.
    
    We define layer push operations given a parity value $\sigma\in \zo$. For $(u,v)\in E$ in some fixed order (where we add dummy edges $(u,u)$ for every $u$), set
    \[
    R_{\neg \sigma,v}\la R_{\sigma,v} + R_{\sigma,u}.
    \]
    and then set $\sigma \la \neg \sigma$.
    We define a reverse layer push as, for the same set of edges, setting
    \[
    R_{\neg\sigma,v}\la R_{\sigma,v} - R_{\sigma,u}
    \]
    and $\sigma\la \neg \sigma$.
    
    Next let $\is(b)$ be the operation where we set
    \[
    R_{0,s} \la R_{0,s}+b.
    \]
    For each $b$, we initialize $\sigma=0$ and define the push and reverse sequence:
    \[
    \cP_b=(\is(b),\lp^{(n)}), \qquad \cR_b = (\rp^{(n)},\is(-b)).
    \]
    
    By essentially the same argument as \cref{clm:reset}, we have that after executing $(\cP_b,\cR_b)$ for $b\in \zo$, every register $R_{\sigma,v}$ is reset to its initial configuration $\tau_{\sigma,v}$. The runtime is straightforward from the description.

    Finally, we must prove correctness. For every $i$, let $\sigma_i\in \zo$ be the parity of the registers pushed to in phase $i$ (and note that $\sigma_0=0$).
    \begin{definition}
        For every $(\sigma,v),i$, define $\zeta_{(\sigma,v),i}$ recursively as follows. First, set $\zeta_{(0,s),0}=1$ and for $v\neq s$ let $\zeta_{(0,v),0}=0$. Then for every $v$ define
        \[
        \zeta_{(\sigma_{i+1},v),i+1} := \zeta_{(\sigma_{i-1},v),i-1}+\sum_{u:(u,v)\in E}\zeta_{(\sigma_i,u),i}.
        \]
    \end{definition}
    We prove that these values are exactly the register difference after the pushes:
    \begin{lemma}
        Let $\alpha_{(\sigma_i,v),i,b}$ be the value of $R_{\sigma_i,v}$ after $\is(b),\lp^{(i)}$. For every $v\in V$ we have
        \[
        \alpha_{(\sigma_i,v),i,1}-\alpha_{(\sigma_i,v),i,0}\equiv \zeta_{(\sigma_i,v),i} \pmod q.
        \]
    \end{lemma}
    \begin{proof}
        For the base case, we have that
        \[
        \alpha_{(0,v),0,1}-\alpha_{(0,v),0,0} = \mathbb{I}[v=s] = \zeta_{(0,v),0}
        \]
        Now assume this holds for $i$ and $i-1$ and consider the $i+1$st push.
        WLOG suppose $\sigma_{i+1}=0$. For $v\in V$ we have
        \[
        \alpha_{(0,v),i+1,b} \equiv \alpha_{(0,v),i-1,b} + \sum_{u:(u,v)\in E}\alpha_{(1,u),i,b} \pmod q
        \]
        and hence 
        \[
        \alpha_{(0,v),i+1,1} - \alpha_{(0,v),i+1,0} \equiv \zeta_{(0,v),i-1} + \sum_{u:(u,v)\in E}\zeta_{(1,u),i} \equiv \zeta_{(1,v),i+1} \pmod q. \qedhere
        \]
    \end{proof}
    Then a simple inductive argument proves that $\zeta_{(\sigma_i,v),i}\le (n+1)^i$, and moreover that the set of $v$ for which $\zeta_{(\sigma_i,v),i}>0$ is exactly those $v$ with an $s\ra v$ path of length at most $i$.
\end{proof}

\subsection{Putting it all together}
We then use these results to prove the main theorems.

For the deterministic algorithm, we choose the register size and modulus $q$ large enough so that the count of paths mod $q$ is equal to the count of paths. 
\mainNL*
\begin{proof}
    We initialize $2n$ registers $\{R_{\sigma,v}\}_{\sigma\in \zo,v\in V}$ each of size $\ell=\lceil \log (n+1)^n+1\rceil$ and set $q=2^\ell$. Since we chose $q=2^\ell$, all registers are valid no matter their initial configuration, so we immediately invoke \cref{thm:stnonzero} with $\len=n$. If the value obtained is nonzero, we return that there is a path, and otherwise return that there is no path. The runtime is immediate from the choice of $\ell$ and $\len$.
\end{proof}

We now give the randomized algorithm.
{\renewcommand\footnote[1]{}
\mainNLrand*}
\begin{proof}
    Let $B_n=(n+1)^n+1$.
    \paragraph{The Algorithm}
    For $I=O(\log n)$ iterations (where the specific constant is to be chosen later), we proceed as follows. We initialize $2n$ registers $\{R_{\sigma,v}\}_{\sigma \in \zo, v\in V}$, each of size
    \[\ell=5\lceil \log B_n\rceil.\] 
    We draw a random modulus $q\in [\log^2 B_n]$ and store in on the worktape. Let $d$ be the largest value such that $qd\le 2^\ell$ (which we can compute in time $\polylog(n)$ and store on the worktape). We have that
    \[
    d \ge \frac{2^{ \ell }}{2q}>n^3
    \]
    Next, we draw a random shift $\beta \in [2^{ \ell }]$ and store it on the worktape. We add \( \beta \) to each register $R_{\sigma,v}$ with initial configuration $\tau_{\sigma,v}$ and verify that $\tau_{\sigma,v}+\beta$ is valid for $q$. If not, we subtract \( \beta \) from all registers and abort (and return $\perp$). Otherwise, we invoke \cref{thm:stnonzero} with this register set and $\len=n$.
    Let the returned value be $\zeta$. We first subtract $\beta$ from all registers. Then, if $\zeta\neq 0$, we return that there is a path, and otherwise proceed to the next iteration. If we exhaust all iterations, we return that there is not a path. 

    \paragraph{Success Probability}
    We first argue that the algorithm does not abort with high probability. There are $2n$ registers, each of which we shift $O(\log n)$ times. By \cref{fct:shift}, the probability that any such shift results in an invalid configuration is at most $2/n^3$, so a union bound completes the proof. 
    
    Next, note if $\zeta_{G,s,t}=0$, the algorithm clearly never returns there is a path. Otherwise, for each $q$ drawn by the algorithm, by \cref{lem:divis} the probability that $\zeta_{G,s,t} \equiv 0 \pmod p$ is at most $(1-O(1/\log\log B_n))=(1-1/O(\log n))$, and hence choosing $I=O(\log n)$ sufficiently large, we obtain that with probability at least $1-1/100$ there is some iteration where we detect a nonzero number of paths, and thus succeed.

    Finally, runtime and space consumption follow directly from the description of the algorithm.
\end{proof}

Finally, we finish the proof of the locally revertible algorithm.
{\renewcommand{\footnote}[1]{}%
\mainNLrandrevert*}
\begin{proof}
    We first modify the input graph $G$ by simulating access to the graph $G'$ using the query oracle of \cref{lem:reducedegree}. Let \( n' = O(n^2) \) be the number of vertices of \( G' \), let $n''=O(m)$ be the number of non-isolated vertices, let $\len=\tO(n)$ be its diameter, and recall that (after virtually adding a self-loop on $t$) it has maximum in-degree $3$. Let $p_{n'}=\lceil \log P_{n'}\rceil$ be the logarithm of the bound on the number of paths in $G'$ of \cref{fct:pathbound}.

    \paragraph{The Algorithm}
    For $I=O(\log n)$ iterations (where the specific constant is to be chosen later), we proceed as follows.
    First, we initialize the registers we will use by adding a random shift as described in \cref{sec:catalyticRegisters}.
    We will have a register corresponding to each layer and vertex in $G'$, indexed as
    \[ (i,v)\in \{0,\ldots,\len\}\times [n'] \ldotp \]
    We allocate $n^3$ total registers on the catalytic tape corresponding to these nodes, but we only initialize the \( \len n'' = \tO(nm) \) registers which corresond to non-isolated nodes of \( G' \).
    (Recall \cref{lem:reducedegree} gives an algorithm for enumerating these vertices in \( \tO(m+n) \) time, amounting to \( \tO(n(m+n)) \) time for \( n \) layers.)
    Call a register \emph{relevant} if it corresponds to \( (i,v) \) where \( v \) is a non-isolated vertex or \( v \in \{ s,t \} \).

    Each register has size
    \[ \ell=5\lceil \log p_{n'}\rceil \ldotp \] 
    We draw a random modulus $q\in [p_{n'}^2]$ and store in on the worktape. Let $d$ be the largest value such that $qd\le 2^\ell$ (which we can compute in time $\polylog(n)$ and store on the worktape). We have that
    \[
    d \ge \frac{2^{ \ell }}{2q}\ge \frac{{m}^3}{2}.
    \]
    Next, we draw a random shift $\beta \in [2^{ \ell }]$ and store it on the worktape. We add \( \beta \) to each relevant register $R_a$ with initial configuration $\tau_a$, and then verify that each $\tau_a+\beta$ is valid for $q$. If not, we first subtract $\beta$ from all relevant registers, and then abort (and return $\perp$). Otherwise, we invoke \cref{thm:stcount} with
    \[
    G=G',\qquad \len=\len
    \]
    except we modify it to only push values from relevant registers.
    Let the returned value be $\alpha$. We restore the catalytic tape by subtracting $\beta$ from all relevant registers. Then, if $\alpha\neq 0$, we return there is a path, and otherwise proceed to the next iteration. If we exhaust all iterations, we return that there is not a path. 

    \paragraph{Success Probability}
    We first argue that the algorithm does not abort with high probability. Note that there are $\tO(n m)$ registers, each of which we shift $O(\log n)$ times. By \cref{fct:shift}, the probability that any such shift results in an invalid configuration is at most $4/m^3$, so a union bound completes the proof. 
    
    Next, let $V$ be the total number of length-$n$ paths from $s$ to $t$ in the modified graph. If $V=0$, the algorithm clearly never returns there is a path. Otherwise, for each $q$ drawn by the algorithm, by \cref{lem:divis} the probability that $V \equiv 0 \pmod p$ is at most $(1-O(1/\log p_{n'}))=(1-1/O(\log n))$, and hence choosing $I=O(\log n)$ sufficiently large, we obtain that with probability at least $1-1/100$ there is some iteration where we detect a nonzero number of paths, and thus succeed.

    Finally, runtime and space consumption follow directly from the description of the algorithm.

    \paragraph{Local Revertibility}
    Finally, we argue that the algorithm is locally revertible. Given a query on register $a$, we query the local revertibility routine of \cref{thm:stcount} on this register. Since we initialized the algorithm of \cref{thm:stcount} with $R_a$ in configuration $\tau_a+\beta$, it returns this value, and we return $\tau_a$ by subtracting the stored shift $\beta$. Finally, since $G'$ has constant degree, the time is as claimed.
\end{proof}

\section{Estimating Random Walks}\label{sec:resultBPL}

The bulk of our proof of \cref{thm:randomGeneral} is a technique for simulating random walks on \emph{acyclic} graphs:
\begin{theorem}[random walk on a DAG]\label{thm:randomDag}
There is an algorithm which, given a directed acyclic graph \( G \) with \( n \) vertices and \( m \) edges, together with vertex \( s \), sink vertex \( t \), and parameter \( \eps > 0 \), returns \( \rho \) such that
\[
\left|\rho - \Pr[\text{random walk from $s$ reaches $t$}]\right|\le \eps \ldotp
\]
The algorithm runs in time \( \tO( nm / \eps ) \).
It uses \( O(\log(nm/\eps)) \) workspace and \( \tO(n \log( m / \allowbreak \eps ) ) \) catalytic space; the algorithm is guaranteed to restore the catalytic tape to its initial state as long as the input is valid --- in particular, the graph \( G \) has no cycles.
\end{theorem}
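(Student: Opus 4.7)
The plan is to allocate one $\ell$-bit register $R_v$ per vertex on the catalytic tape, where $\ell = O(\log(m/\eps))$ is chosen large enough that the shift trick of \cref{sec:catalyticRegisters} renders each $R_v$ valid modulo $d_v := \DOut(v)$ with high probability (aborting otherwise). I then execute $K := \lceil 2m/\eps \rceil$ \emph{forward walks} from $s$: at vertex $v$, take the outgoing edge indexed by $R_v \bmod d_v$, then set $R_v \gets R_v + 1$; each walk terminates upon reaching a sink. Letting $N_t$ be the number of forward walks ending at the sink $t$, the algorithm outputs $\rho := N_t / K$. To restore the catalytic tape, the algorithm performs $K$ \emph{reverse walks}, which also start at $s$ and walk forward in the DAG but instead first set $R_v \gets R_v - 1$ and then take the edge $R_v \bmod d_v$.

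For correctness, let $p_v := \Pr[\text{random walk from } s \text{ visits } v]$ and $N_v$ be the number of forward walks that visit $v$. Since $R_v$ is effectively an incrementing counter modulo $d_v$, the count $N_{v,u}$ of walks using each outgoing edge $(v,u)$ is within $1$ of $N_v/d_v$. Combining this local consistency with $N_u = \sum_{v \to u} N_{v,u}$ and the analogous relation $K p_u = \sum_{v \to u} K p_v/d_v$, I obtain the recursion
\[
e_u \le \DIn(u) + \sum_{v \to u} e_v/d_v, \qquad e_s = 0,
\]
where $e_v := |N_v - K p_v|$. Unrolling along paths, $e_u \le \sum_w \DIn(w) \cdot \Pr[\text{random walk from } w \text{ reaches } u] \le \sum_w \DIn(w) = m$, and dividing by $K \ge 2m/\eps$ yields $|\rho - p_t| \le \eps$.

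For catalyticity, I prove by induction on $k \in \{1,\ldots,K\}$ that the $k$-th reverse walk traces exactly the same sequence of vertices and edges as the $(K{+}1{-}k)$-th forward walk, and that after it completes, every $R_v$ has returned to the value it held after the $(K{-}k)$-th forward walk. The key observation is that at each vertex $v$ visited by the $k$-th reverse walk, the prior reverse walks have (by induction) already cancelled the last $k{-}1$ forward visits to $v$, so the current $R_v$ equals its value immediately after the $(K{+}1{-}k)$-th forward visit to $v$; the decrement then recovers the pre-visit value and selects the same outgoing edge. After all $K$ reverse walks, every register is restored.

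The main obstacle is the error-propagation bound: naively one might fear the factors $1/d_v$ let error accumulate along long paths. The flow interpretation---injecting $\DIn(u)$ units of error at each vertex and redistributing proportionally to $1/d_v$ along outgoing edges (so interior vertices conserve flow)---saves the day because the total error at any vertex is bounded by the total injection $m$, independent of $n$. A secondary subtlety is that the shift trick must commute with the increment/decrement operations, but since it is a global additive offset respecting the cyclic order modulo $d_v$, the entire catalyticity argument goes through unchanged. The resource bounds follow directly: $2K = O(m/\eps)$ walks of length at most $n$, each step performing $\tilde O(\ell)$-bit arithmetic, give time $\tilde O(nm/\eps)$, catalytic space $n\cdot\ell = \tilde O(n \log(m/\eps))$, and workspace $O(\log(nm/\eps))$ to hold $\beta$, the walk counter $N_t$, and the current position.
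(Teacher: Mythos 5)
Your approach is essentially the paper's: a one-register-per-vertex counter that cycles over outgoing edges, $K=\lceil 2m/\eps\rceil$ forward walks, and $K$ reverse walks that undo the increments. The catalyticity argument (the $k$-th reverse walk retraces the $(K{+}1{-}k)$-th forward walk) matches \cref{lem:reverse} and \cref{cor:isCatalytic}, and your flow-based error bound $e_u \le \DIn(u)+\sum_{v\to u}e_v/\DOut(v)$, unrolled to $e_t\le m$, plays the same role as the paper's \cref{lem:fair,lem:addError,lem:accurate}.

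However, there is one substantive deviation you should reconsider: you invoke the shift trick of \cref{sec:catalyticRegisters} to make each $R_v$ a \emph{perfect} counter modulo $\DOut(v)$, which makes the algorithm randomized (it draws a random $\beta$ and may abort with small probability). \Cref{thm:randomDag} is stated --- and proved in the paper --- as a \emph{deterministic} catalytic algorithm with no failure probability; the paper is careful throughout to distinguish ``catalytic algorithm'' from ``randomized catalytic algorithm.'' The paper avoids the shift trick entirely: it just stores $R_v$ as an $\ell$-bit integer with $\ell=\lceil\log K\rceil$, uses $R_v\bmod\DOut(v)$ to pick the edge, and increments modulo $2^\ell$. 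Since $2^\ell\ge K$, the register wraps around from $2^\ell{-}1$ to $0$ at most once during the forward phase, so the residue sequence $R_v\bmod\DOut(v)$ is a perfect cycle with at most one interruption. This degrades the local fairness bound from ``counts differ by at most $1$'' to ``counts differ by at most $2$'' (\cref{lem:fair}), and hence the global bound to $e_t\le 2m$ rather than $e_t\le m$ --- exactly the factor that the choice $K=\lceil 2m/\eps\rceil$ is designed to absorb. If you replace your shift step with this observation and weaken your recursion to $e_u \le 2\DIn(u)+\sum_{v\to u}e_v/\DOut(v)$, your flow argument goes through unchanged and recovers the deterministic theorem. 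In fact your unrolling of the error recursion as a flow-conservation argument is arguably a cleaner presentation than the paper's topological-cut induction in \cref{lem:accurate}, and gives the same bound once the factor of $2$ is accounted for.

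One further small point: your union bound for the shift trick would need to cover $n$ registers with \emph{different} moduli $q_v=\DOut(v)$; this does work since $\sum_v q_v/2^\ell = m/2^\ell$, but requires $\ell$ to be a sufficiently large multiple of $\log m$, and you must remember to subtract $\beta$ before aborting to preserve catalyticity in the failure branch --- both details you would owe the reader. The deterministic route sidesteps all of this.
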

(This version of our algorithm has the curious property that it can be tricked into making irreversible changes to its catalytic tape if \( G \) has cycles. If this is undesirable, the algorithm could be changed to require an efficiently checkable proof that \( G \) is acyclic, like a topological ordering of its vertices.)

The algorithm is described by \( \AlgWalk \) (\cref{alg:walk}) and its subroutine \( \AlgWalkOnce \) (\cref{alg:walkOnce}).
The prove \cref{thm:randomDag}, we must prove two things for every $G,s,t,\eps$: in \cref{sec:correct}, we show that \( \AlgWalk( G,s,t,\eps ) \) gives the correct output, and in \cref{sec:restore}, we show that it restores the catalytic tape at the end of the computation.
\Cref{sec:completeRandomDagProof} ties the proof together and analyzes the time and space used.
\Cref{sec:completeRandomGeneralProof} proves \cref{thm:randomGeneral} by converting any graph into a larger acyclic graph, and then in \cref{sec:stationary}, we explore what happens if we apply our technique directly to a graph with cycles, skipping the conversion step.

\subsection{Registers}\label{sec:walkRegisters}

The algorithm uses one register \( R_v \) on the catalytic tape for every vertex \( v \) of the graph.
We allocate \( \ell \) bits to each register, where \( \ell = \lceil \log K \rceil \) and
\( K = \KVALUE \)
is the number of simulations run by \cref{alg:walk}.
So, each register stores a number in the range \( 0, \dotsc, 2^{ \ell } - 1 \).
We choose this value of \( \ell \) so that if we increment the value of a register \( K \) times, each time adding one modulo \( 2^{ \ell } \), there is at most one time when it resets to \( 0 \) instead of increasing by one.
Concretely, this is used in the proof of \cref{lem:fair} to bound the error introduced by this reset.

\begin{algorithm}
Let \( K = \KVALUE \).\;
\( \NReach \gets 0 \)\;
\tcc{Forward phase: estimate probability of reaching \( t \)}
\For{\nllabel{line:beginForwardLoop}\( i \in [K] \)}{
\( v \gets \AlgWalkOnce( M, x, s, \Fwd ) \)\;
\If{\( v = t \)}{
\( \NReach \gets \NReach + 1 \)\;
}
}\nllabel{line:endForwardLoop}
\tcc{Reverse phase: reset the catalytic tape}
\For{\( i \in [K] \)}{
\( \AlgWalkOnce( M, x, s, \Rev ) \)\;
}
\Return{$\NReach/K$}\label{line:ReturnNReach}
\caption{\label{alg:walk}\( \AlgWalk(G,s,t,\eps) \). Parameters: graph $G$, source $s$, target $t$, accuracy $\eps$.)}
\end{algorithm}
\begin{algorithm}
Registers: one register \( R_v \) in \( [2^{ \ell }] \) for every vertex \( v \); see \cref{sec:walkRegisters}.\;
\( v \gets s \)\;
\While{\nllabel{line:checkHalt}\( \outdeg(v) > 0 \)}{
	\tcc{Choose an outgoing edge based on \( R_v \), and update \( R_v \).}
	\uIf{\( m = \Fwd \)}{
		\( r \gets R_v \bmod{\outdeg(v)} \)\label{line:chooseForward}\;
		\( R_v \gets (R_v + 1) \bmod 2^{ \ell } \)\nllabel{line:addForward}\;
	}
	\Else(\tcc*[h]{\( m = \Rev \)}){
		\( R_v \gets (R_v - 1) \bmod 2^{ \ell } \)\nllabel{line:subtractReverse}\;
		\( r \gets R_v \bmod{\outdeg(v)}\)\;
	}
	\( v \gets \outnbr_G(v,r) \)\nllabel{line:walkForward}\;
}
\Return{\( v \)}\;
\caption{\label{alg:walkOnce}\( \AlgWalkOnce(G,s,m) \). Parameters: graph $G$, source $s$, mode \( m \in \{ \Fwd, \Rev \} \). Returns a sink vertex of \( G \). (This algorithm modifies the catalytic tape, so it is not a catalytic algorithm. However, the changes are reversible, so it can be used as a subroutine in a catalytic algorithm.)}
\end{algorithm}

\subsection{The output is correct}\label{sec:correct}

To evaluate how well \( \AlgWalk \) simulates a random walk on \( G \), we will compare the number of times \( \AlgWalkOnce \) visits each vertex to the probability a true random walk would reach it.
We will argue that for every vertex \( v \), the algorithm splits its time fairly over the \( \DOut(v) \) different outgoing edges when it takes a step on \cref{line:walkForward}, from which it will follow that our simulation is sufficiently accurate.

Throughout this section, fix $G,s,t$ and $\eps$.
Fix an initial content of the catalytic tape $\tau$, and let $R_v$ be the \( \ell \)-bit register allocated to vertex $v$ with initial value $\tau_v$.

Let \( K = \KVALUE \) as in \( \AlgWalk \). 

The following definition establishes some notation to help reason about the accuracy of the simulation.
It is illustrated in \cref{fig:probsAndCounts}.
\begin{definition}[visit probability \( p_v \), visit count \( \NumVisits_v \), error \( e_v \), transition count \( \NumVisits_v^r \), transition error \( e_v^r \)]\label{def:probsAndCounts}
Let \( p_v \) be the probability that a random walk starting at \( s \) reaches a vertex \( v \).

Let \( \NumVisits_v \) be the number of times \( \AlgWalkOnce \) visits \( v \) during the forward phase of \( \AlgWalk \) (lines \ref{line:beginForwardLoop}--\ref{line:endForwardLoop}).
(A ``visit'' to the vertex stored in the variable \( v \) occurs whenever \( \AlgWalkOnce \) evaluates the while loop condition on line~\ref{line:checkHalt}.)

Define the \emph{error} \( e_v = |\NumVisits_v - K p_v| \).

For \( r \in [\DOut(v)] \), let \( \NumVisits_v^r \) be the number of those visits for which the variable \( r \) had the given value on \cref{line:walkForward} of \( \AlgWalkOnce \) (so \( \NumVisits_v = \sum_{ r=0 }^{ \DOut(v)-1 } \NumVisits_v^r \) unless \( v \) is a sink).
\( \NumVisits_v^r \) counts transitions where the algorithm followed the \( r \)-th outgoing edge from \( v \).

Define the \emph{transition error} \( e_v^r = | \NumVisits_v^r - K p_v / \DOut(v) | \).
\end{definition}

\begin{lemma}\label{lem:fair}
For every vertex \( v \) and \( r \in [\DOut(v)] \), \( e_v^r \le 2 + e_v / \DOut(v) \).
\end{lemma}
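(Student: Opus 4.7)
The plan is to reduce the claim to a direct combinatorial statement about the sequence of values read from $R_v$. Fix a vertex $v$, and observe that during the forward phase the algorithm reads $R_v$ and then (on line~\ref{line:addForward}) increments it by $1 \bmod 2^{\ell}$. So the $\NumVisits_v$ values of $R_v$ that are read and reduced mod $\DOut(v)$ to produce $r$ on line~\ref{line:chooseForward} are exactly $\tau_v, \tau_v + 1, \dotsc, \tau_v + \NumVisits_v - 1$ taken mod $2^{\ell}$. Because $G$ is acyclic, each of the $K$ walks in the forward loop visits $v$ at most once, so $\NumVisits_v \le K \le 2^{\ell}$; hence this sequence of reads wraps around the modulus $2^{\ell}$ at most once, splitting it into at most two consecutive (unbroken) runs of integers.

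Next, I would use the elementary fact that within any run of $L$ consecutive integers, each residue class modulo $\DOut(v)$ is represented either $\lfloor L/\DOut(v) \rfloor$ or $\lceil L/\DOut(v) \rceil$ times, so the count of a given residue differs from $L / \DOut(v)$ by strictly less than $1$. Summing this error over the (at most two) runs yields
\[
\left| \NumVisits_v^r - \frac{\NumVisits_v}{\DOut(v)} \right| < 2
\]
for every $r \in [\DOut(v)]$.

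Finally, the conclusion follows from the triangle inequality applied to the definition of $e_v^r$:
\[
e_v^r = \left| \NumVisits_v^r - \frac{K p_v}{\DOut(v)} \right| \le \left| \NumVisits_v^r - \frac{\NumVisits_v}{\DOut(v)} \right| + \frac{\left| \NumVisits_v - K p_v \right|}{\DOut(v)} \le 2 + \frac{e_v}{\DOut(v)}.
\]
The only real subtlety is controlling the wraparound of $R_v$; this is precisely the reason for the choice $\ell = \lceil \log K \rceil$ emphasized in \cref{sec:walkRegisters} together with the acyclicity of $G$, which jointly guarantee at most one wrap and hence the constant $2$ (rather than something depending on how many times $R_v$ overflows).
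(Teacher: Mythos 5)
Your proof is correct and follows essentially the same approach as the paper: both exploit that the values read from \( R_v \) form a run of consecutive integers wrapping modulo \( 2^{\ell} \) at most once (using \( \NumVisits_v \le K \le 2^{\ell} \), with acyclicity behind the first inequality), deduce \( |\NumVisits_v^r - \NumVisits_v / \DOut(v)| < 2 \), and finish with the triangle inequality. The only cosmetic difference is that you count residues directly within the at most two consecutive runs, whereas the paper first bounds pairwise differences \( |\NumVisits_v^r - \NumVisits_v^{r'}| \le 2 \) and then averages over \( r' \).
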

\begin{proof}
Let us temporarily imagine the register \( R_v \) always stores a value in \( [ \DOut(v) ] \), and that each time \( \AlgWalkOnce \) visits \( v \), \cref{line:addForward} increments it as
\( R_v \gets (R_v + 1) \bmod \DOut(v) \)
instead of incrementing modulo \( 2^{ \ell } \).
This causes the value \( r \) to cycle through the values \( 0, \dotsc, \DOut(v) - 1 \), so that on the \( t \)-th visit to \( v \), \( r = (\tau_v + t - 1) \bmod \DOut(v) \), where \( \tau_v \) is the starting value of \( R_v \) from the catalytic tape.
As a result, any two transition counts must differ by at most one:
\begin{equation}\label{eq:FantasyNumVisits}
| \NumVisits_v^r - \NumVisits_v^{ r' } | \le 1 \rlap{\qquad \text{(pretending \( R_v \in [\DOut(v)] \)).}}
\end{equation}

In fact, \( R_v \) cycles through the values \( 0, \dotsc, 2^{ \ell } \).
As a result \( r \) cycles through the values \( 0, \dotsc, \DOut(v) - 1 \) in order, with one exception: if register \( R_v \) cycles from \( 2^{ \ell - 1 } \) to \( 0 \), then the cycle is interrupted.
Since \( 2^{ \ell } \ge K \), this can happen at most once.
So, instead of \cref{eq:FantasyNumVisits}, we have have that
\[ | \NumVisits_v^r - \NumVisits_v^{ r' } | \le 2 \ldotp \]

Since \( \NumVisits_v = \sum_{ r=0 }^{ \DOut(v)-1 } \NumVisits_v^r \), it follows that
\[
    \left| \NumVisits_v^r - \frac{ \NumVisits_v }{ \DOut(v) } \right|
    =
    \left| \NumVisits_v^r - \sum_{ r'=0 }^{ \DOut(v)-1 } \frac{ \NumVisits_v^{ r' } }{ \DOut(v) } \right|
    \le
    \frac{ 1 }{ \DOut(v) }\sum_{ r'=0 }^{ \DOut(v)-1 } | \NumVisits_v^r - \NumVisits_v^{ r' } |
    <
    2
    \ldotp
\]
and so
\[
e_v^r
= \left| \NumVisits_v^r - \frac{ K p_v }{ \DOut(v) } \right|
\le \left| \NumVisits_v^r - \frac{ \NumVisits_v }{ \DOut(v) } \right| + \left| \frac{ \NumVisits_v }{ \DOut(v) } - \frac{ K p_v }{ \DOut(v) } \right|
< 2 + \frac{ e_v }{ \DOut(v) }
\ldotp \qedhere
\]
\end{proof}

\begin{lemma}\label{lem:addError}
Let \( (u_1, v), \dotsc, (u_d, v) \) be all edges incoming to a vertex \( v \), where each \( (u_i, v) \) is the \( r_i \)-th outgoing edge from \( u_i \).
Then \( e_v \le \sum_{ i=1 }^d e_{ u_i }^{ r_i } \).
\end{lemma}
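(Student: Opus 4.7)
The plan is to reduce everything to a simple accounting identity plus the triangle inequality. I start by recalling that $\AlgWalk$ launches each of its $K$ simulations from $s$, and that $G$ is acyclic. These two facts together mean that for any $v \neq s$, every visit to $v$ during the forward phase is the result of following some incoming edge $(u_i,v)$, which occurs exactly when a visit to $u_i$ chooses $r=r_i$ on \cref{line:walkForward} of \cref{alg:walkOnce}. Summing over incoming edges therefore gives the identity
\[
\NumVisits_v \;=\; \sum_{i=1}^{d} \NumVisits_{u_i}^{r_i}.
\]

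Next I derive the matching probabilistic identity. Since $G$ is a DAG, a random walk from $s$ visits $v$ at most once, so $p_v$ is also the expected number of visits to $v$. Conditioning on the last edge used to enter $v$, a walk reaches $v$ via the edge $(u_i,v)$ with probability $p_{u_i}/\DOut(u_i)$, and these events are disjoint, yielding
\[
p_v \;=\; \sum_{i=1}^{d} \frac{p_{u_i}}{\DOut(u_i)}.
\]
Multiplying this by $K$, subtracting from the previous display, and applying the triangle inequality gives
\[
e_v \;=\; \Bigl|\NumVisits_v - K p_v\Bigr| \;\le\; \sum_{i=1}^{d} \left|\NumVisits_{u_i}^{r_i} - \frac{K p_{u_i}}{\DOut(u_i)}\right| \;=\; \sum_{i=1}^{d} e_{u_i}^{r_i},
\]
which is exactly the desired inequality.

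Finally I handle the boundary case $v=s$. Acyclicity forces every in-neighbor $u_i$ of $s$ to be unreachable from $s$, so the walk never visits $u_i$; thus $p_{u_i}=0$ and $\NumVisits_{u_i}^{r_i}=0$, giving $e_{u_i}^{r_i}=0$ for all $i$. On the other hand $p_s=1$ and $\NumVisits_s=K$ (each of the $K$ simulations contributes exactly one visit to $s$ at its start and never returns), so $e_s=0$ and the claimed inequality holds trivially. I expect no real obstacle here; the only subtlety is remembering to invoke acyclicity both to equate $p_v$ with an expected visit count and to dispose of the $v=s$ case cleanly.
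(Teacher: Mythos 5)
Your proof is correct and follows essentially the same route as the paper: establish the two accounting identities $\NumVisits_v = \sum_i \NumVisits_{u_i}^{r_i}$ and $p_v = \sum_i p_{u_i}/\DOut(u_i)$, then conclude by the triangle inequality. The only difference is that you explicitly handle the boundary case $v=s$, where the first identity actually fails (since $\NumVisits_s = K$ but no incoming edge to $s$ is ever traversed in the DAG); the paper states the lemma for general $v$ but only ever invokes it with $v \neq s$ in the induction of \cref{lem:accurate}, so your extra check is a welcome bit of rigor rather than a deviation.
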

\begin{proof}
Using the facts that \( \NumVisits_v = \sum_{ i = 1 }^d \NumVisits_{ u_i }^{ r_i } \) and \( p_v = \sum_{ i = 1 }^d p_{ u_i } / \DOut( u_i ) \), we have:
\[
e_v
= | \NumVisits_v - K p_v |
= \left| \sum_{ i = 1 }^d ( \NumVisits_{ u_i }^{ r_i } - K p_{ u_i } / \DOut( u_i ) ) \right|
\le \sum_{ i = 1 }^m | \NumVisits_{ u_i }^{ r_i } - K p_{ u_i } / \DOut( u_i ) |
= \sum_{ i = 1 }^d e_{ u_i }^{ r_i } \qedhere
\]
\end{proof}

\begin{lemma}\label{lem:accurate}
The final value \( \NReach / K \) returned by \( \AlgWalk \) on \cref{line:ReturnNReach} is within additive error \( \eps \) of the true visit probability \( p_t \).
\end{lemma}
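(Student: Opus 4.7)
The plan is to reduce the lemma to showing $e_t \le 2m$ and to prove this by induction along a topological ordering of \( G \). Since $t$ is a sink, every walk that reaches $t$ terminates there, so $\NReach = \NumVisits_t$; hence $|\NReach/K - p_t| = e_t/K$, and because $K = \lceil 2m/\eps \rceil \ge 2m/\eps$ it suffices to establish $e_t \le 2m$.

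Combining \cref{lem:fair,lem:addError} yields the recursion
\[ e_v \le 2\DIn(v) + \sum_{(u,v) \in E} e_u / \DOut(u) \ldotp \]
The obvious candidate $e_v \le 2m$ is not directly inductive, because the $1/\DOut(u)$ factor can be cancelled by having many in-neighbors of $v$. I will instead prove the strengthened claim
\[ e_v \le 2 m_v, \quad \text{where } m_v = \sum_{(a, b) \in E} \Pr[\text{random walk from } b \text{ reaches } v] \ldotp \]
Since each probability is at most $1$, this implies $e_v \le 2m$.

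The base cases are easy: $e_s = 0$ because $\NumVisits_s = K = K p_s$, and any other source $v$ is unreachable from $s$, so $p_v = \NumVisits_v = 0$. For the inductive step at any non-source $v \ne s$, substitute the hypothesis into the recursion and swap sums to get
\[ e_v \le 2\DIn(v) + 2 \sum_{(a,b) \in E} \sum_{(u, v) \in E} \frac{\Pr[b \to u]}{\DOut(u)} \ldotp \]
The inner sum equals the probability that a walk from $b$ reaches $v$ in at least one step. Acyclicity of \( G \) is crucial here: when $b = v$ this probability is $0$, and when $b \ne v$ it equals $\Pr[b \to v]$. The $\DIn(v)$ incoming edges with $b = v$ contribute $\DIn(v)$ to $m_v$ (since $\Pr[v \to v] = 1$), so the double sum equals $m_v - \DIn(v)$, and therefore $e_v \le 2\DIn(v) + 2(m_v - \DIn(v)) = 2 m_v$, closing the induction.

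The main obstacle is designing the correct invariant: the probability-weighted form is chosen so that the inner sum telescopes, and the extra $2\DIn(v)$ produced by the recursion exactly cancels the loss incurred by excluding the diagonal $b = v$ contribution, a cancellation that depends crucially on \( G \) being acyclic.
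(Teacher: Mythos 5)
Your proof is correct and establishes the same key bound $e_t \le 2m$, but by a genuinely different route through the same two ingredients (\cref{lem:fair,lem:addError}). The paper inducts over prefix cuts of a topological order, maintaining the invariant $\sigma_i = \sum_{(v,r)\in F_i} e_v^r \le 2D_i$: the total transition error on edges crossing the $i$th cut, bounded by twice the number of edges whose tail lies on the left side. You instead do a per-vertex induction in topological order with the probability-weighted invariant $e_v \le 2m_v$ where $m_v = \sum_{(a,b)\in E}\Pr[\text{walk from } b \text{ reaches } v]$, sidestepping the fact (which you correctly identify) that the naive candidate $e_v\le 2m$ does not close the induction. Your version uses acyclicity twice: once for the topological order, and once to get $\Pr[\text{walk from } v \text{ reaches } v \text{ in at least one step}]=0$, which is what makes the $2\DIn(v)$ term from your recursion exactly absorb the diagonal contributions to $m_v$. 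The paper's cut-sum accounting avoids having to guess a per-vertex invariant (the quantity $\sigma_i$ is the natural one to track across a cut), while your approach yields a cleaner recursive inequality on $e_v$ itself; both are valid and of comparable effort.
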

\begin{proof}
We prove this from \cref{lem:fair,lem:addError} using an induction argument.

Let \( V \) be the set of vertices of \( G \).
Let \( v_0, v_1, \dotsc, v_n \) be a topological order on \( V \) ending with the sink vertex \( t \).
That is, for any edge \( (u,v) \), \( u \) appears before \( v \) in the order, and \( v_n = t \).

For \( i \in \{ 0, 1, \dotsc, n-1 \} \), consider the cut of \( G \) with vertices \( v_0, \dotsc, v_i \) on the left and \( v_{ i+1 }, \dotsc, v_n \) on the right.
We are interested in the total error over all the transitions which cross each such cut.

Let \( F_i \subseteq \{ (v,r) \mid v \in V, r \in [ \DOut(v) ] \} \) be the set of transitions which cross the cut.
That is, a pair \( ( v, r ) \) is in \( F_i \) if \( v \) is on the left of the cut and the \( r \)-th outgoing edge from \( v \) leads to a vertex on the right of the cut.
See \cref{fig:topoInduction}.

Let \( D_i = \sum_{ j=0 }^i \DOut(v) \): that is, \( D_i \) is the number of edges that originate from the left side of the cut (whether or not they cross the cut).

Let \( \sigma_i = \sum_{ ( v, r ) \in F_i } e_v^r \), recalling from \cref{def:probsAndCounts} that \( e_v^r = | \NumVisits_v^r - K p_v / \DOut(v) | \)).
We will show by induction that \( \sigma_i \le 2 D_i \) for all \( i \).

Base case: \( \sigma_0 = \sum_{ r=0 }^{ \DOut(v_0) } e_{ v_0 }^r \).
We know \( \NumVisits_{ v_0 } = K \) and \( p_{ v_0 } = 1 \), so \( e_{ v_0 } = 0 \) and so by \cref{lem:fair}, \( \sigma_0 \le 2 \DOut( v_0 ) \).

Induction step: Fix \( i \in [n-1] \) and assume \( \sigma_i \le 2(i+1) \).
Let \( ( v_{ j_1 }, v_i ), \dotsc, ( v_{ j_d }, v_i ) \) be the edges incoming to \( v_i \), where each \( ( v_{ j_k }, v_i ) \) is the \( r_k \)-th outgoing edge from \( v_{ j_k } \).
Those are the transitions that contribute to \( \sigma_i \) but not \( \sigma_{ i+1 } \), so we have \( \sigma_{ i+1 } = \sigma_i + \sum_{ r=0 }^{ \DOut( v_i ) - 1 } e_{ v_i }^r - \sum_{ k=1 }^d e_{ v_{ j_k } }^{ r_k } \).
From \cref{lem:fair}, we have
\[ \sum_{ r=0 }^{ \DOut( v_i ) - 1 } e_{ v_i }^r \le 2 \DOut( v_i ) + e_{ v_i }, \]
and so applying \cref{lem:addError} gives
\[
\sigma_{ i+1 }
\le \sigma_i + 2 \DOut( v_i ) + e_{ v_i } - \sum_{ k=1 }^d e_{ v_{ j_k } }^{ r_k }
\le \sigma_i + 2 \DOut( v_i )
\le 2 D_{ i+1 }
\]
completing the induction.

\begin{figure}
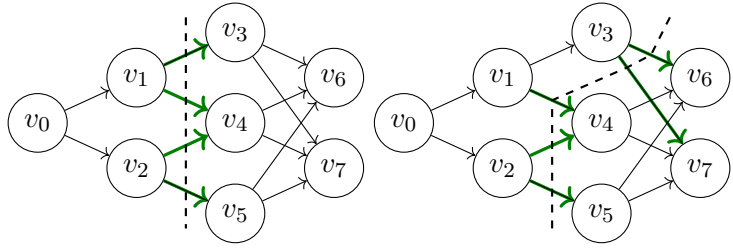

\centering
\begin{tikzpicture}
\input{gen_frontier_0}
\path (10) edge coordinate (a) (20);
\path (11) edge coordinate (b) (22);
\path (a) +(0,0.5) coordinate (top) (b) +(0,-0.5) coordinate (bottom);
\draw [dashed,thick] (top) -- (bottom);
\end{tikzpicture}
\begin{tikzpicture}
\input{gen_frontier_1}
\path
	(20) edge coordinate (a) (30)
	(20) edge coordinate [very near start] (b) (31)
	(10) edge coordinate (c) (21)
	(11) edge coordinate (d) (22)
	(a) +(0.25,0.5) coordinate (start)
	(d) +(0,-0.5) coordinate (end);
\draw [dashed,thick] (start) -- (a) -- (b) -- (c) -- (end);
\end{tikzpicture}
\caption{\label{fig:topoInduction}Two steps in the induction argument in the proof of \cref{lem:accurate}. On the left, \( i=2 \) and on the right, \( i=3 \). In each case, \( F_i \) consists of the edges that cross the dashed line, and \( \sigma_i \) is the sum of the errors on those edges.}
\end{figure}

In particular, \( \sigma_{ n-1 } \le 2 m \).
The corresponding set of edges \( F_{ n-1 } \) are exactly vertex \( t \)'s in-edges.
Therefore, by \cref{lem:addError}, \( e_t \) is at most
\[
\sum_{ ( v, r ) \in F_{ n-1 } } e_v^r
= \sigma_{ n-1 }
\le 2m
\]
When \( \AlgWalk \) reaches \cref{line:ReturnNReach}, \( \NReach = \NumVisits_t \), and so the algorithm returns \( \NumVisits_t / K \).
This is within \( \eps \) of \( p_t \) because
\[ \left| \frac{ \NReach }{ K } - p_t \right| = \frac{ e_t }{ K } \le \frac{ 2m }{ \KVALUE } \le \eps \ldotp \qedhere \]
\end{proof}

\subsection{The catalytic tape is restored.}\label{sec:restore}

\( \AlgWalk \) restores its catalytic tape when it finishes.
To see this, the following is enough:

\begin{lemma}\label{lem:reverse}
Running \( \AlgWalkOnce( G,s,t, \Fwd ) \) then \( \AlgWalkOnce( G,s,t, \Rev ) \) leaves all register values \( R_v \) with the same values they started with.
\end{lemma}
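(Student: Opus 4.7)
The plan is to show that the reverse walk traces exactly the same sequence of vertices as the forward walk, undoing its register updates step by step. First I would let $v_0 = s, v_1, \ldots, v_T$ denote the vertices visited by $\AlgWalkOnce(G, s, \Fwd)$, with $v_T$ the sink where it halts. Because $G$ is acyclic (the ambient hypothesis of \cref{thm:randomDag}), these vertices are pairwise distinct, so each register $R_{v_i}$ is touched exactly once during the forward walk: the value $r_i := \tau_{v_i} \bmod \DOut(v_i)$ is read on \cref{line:chooseForward}, determining the next vertex $v_{i+1} = \outnbr_G(v_i, r_i)$, and then $R_{v_i}$ is incremented once on \cref{line:addForward}. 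So after the forward call, $R_{v_i} = \tau_{v_i} + 1 \pmod{2^\ell}$ for $0 \le i < T$, $R_{v_T} = \tau_{v_T}$, and all other registers are unchanged.

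Next I would prove by induction on $i$ that the reverse call, also started at $s$, visits $v_0, v_1, \ldots, v_T$ in this order, and that upon reaching $v_i$ the register $R_{v_i}$ still holds its post-forward value. The base case $i = 0$ is immediate since the reverse walk also begins at $s = v_0$, and no reverse step has yet executed. For the inductive step with $i < T$: \cref{alg:walkOnce} in $\Rev$ mode first performs $R_{v_i} \gets R_{v_i} - 1 \pmod{2^\ell}$ on \cref{line:subtractReverse}, restoring $R_{v_i}$ to $\tau_{v_i}$; it then reads $r = \tau_{v_i} \bmod \DOut(v_i) = r_i$ and on \cref{line:walkForward} steps to $\outnbr_G(v_i, r_i) = v_{i+1}$. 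Distinctness of the $v_j$'s guarantees that no earlier reverse step has altered $R_{v_{i+1}}$, so it still equals $\tau_{v_{i+1}} + 1$, closing the induction.

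The induction yields the reverse walk visits $v_T$ at the end and halts there (its $\outdeg$ is $0$), having restored each of $R_{v_0}, \ldots, R_{v_{T-1}}$ to its initial value by the decrement on \cref{line:subtractReverse}; the sink register $R_{v_T}$ and all registers outside the walk were never modified. Hence every $R_v$ equals $\tau_v$, proving the lemma.

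The only delicate point, and the one I would emphasize explicitly, is the asymmetric ordering of \emph{read} and \emph{modular increment/decrement} in $\AlgWalkOnce$: the forward walk reads $r_i$ \emph{before} adding one to $R_{v_i}$, whereas the reverse walk subtracts one \emph{before} reading. This asymmetry is precisely what aligns the reverse read with the forward read and makes each step undo the corresponding forward step. Modular wrap-around causes no trouble because each visited register sees exactly one increment in the forward phase matched by exactly one decrement in the reverse phase.
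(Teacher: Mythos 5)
Your proof is correct and follows essentially the same route as the paper's: both argue by induction that the reverse call visits the identical vertex sequence, hinging on the fact that acyclicity makes each register be modified at most once per walk so that the single decrement on \cref{line:subtractReverse} exactly cancels the single increment on \cref{line:addForward}. Your explicit emphasis on the read-before-increment versus decrement-before-read asymmetry is a nice touch but does not change the substance of the argument.
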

\begin{proof}
It is enough to show both calls to \( \AlgWalkOnce \) visit the same sequence of vertices, since then each register is incremented and decremented the same number of times.
(We say \( \AlgWalkOnce \) ``visits'' vertex $v$ each time the loop condition on line~\ref{line:checkHalt} is evaluated.)

Loosely speaking, this is true because each time \( \AlgWalkOnce(G, \allowbreak s, \allowbreak \Rev) \) decides which outgoing edge \( \outnbr_G(v,r) \) to follow, it first (\cref{line:subtractReverse}) undoes the change made by \( \AlgWalkOnce(G, \allowbreak s, \allowbreak \Fwd) \), and so it ends up choosing the same edge index \( r \).
This argument relies on the fact that \emph{a single run of \( \AlgWalkOnce \) never modifies the same register \( R_v \) more than once}, which follows from the fact that \( G \) has no cycles.


To make this more precise, let \( v_0, \dotsc, v_t \) be the vertices visited by \( \AlgWalkOnce( G, \allowbreak s, \allowbreak \Fwd ) \), and \( v'_0, \dotsc, v'_{ t' } \) be the vertices visited by the subsequent call to \( \AlgWalkOnce( G, \allowbreak s, \allowbreak \Rev ) \).
We show by induction that \( v_i = v'_i \) for each \( i \), so that in particular the main loop ends at the same sink vertex in each case and so \( t = t' \).

To begin with, \( v_0 = v'_0 = s \). Now assume \( v_i = v'_i \).
If \( v_i \) is a sink, both subroutine calls halt and we are finished.
Otherwise, we must show the same value \( r \) is chosen both times.
Since neither subroutine call made any other changes to \( R_v \), when the second subroutine call subtracts one from \( R_v \) on \cref{line:subtractReverse}, it exactly cancels out the only change made by the first subroutine call on \cref{line:addForward}, and so the same value \( r \) is recovered, and so the same next step is taken: \( v_{ i+1 } = v'_{ i+1 } \).
\end{proof}

\begin{corollary}\label{cor:isCatalytic}
\( \AlgWalk \) leaves its catalytic tape unchanged.
\end{corollary}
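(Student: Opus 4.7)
My plan is to deduce the corollary from \cref{lem:reverse} by induction on \( K \), the number of simulations. Let \( F \) and \( R \) denote the tape-transformations induced by one call to \( \AlgWalkOnce(G,s,\Fwd) \) and \( \AlgWalkOnce(G,s,\Rev) \) respectively; each is a well-defined deterministic function on tape configurations, since the graph and starting vertex are fixed. In this notation, \cref{lem:reverse} says \( R(F(\tau)) = \tau \) for every initial tape \( \tau \), and \( \AlgWalk \) applies \( R^K \circ F^K \) to the initial catalytic tape (aside from purely local updates to \( \NReach \), which is workspace and does not touch the catalytic tape).

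The induction is then immediate. Setting \( \sigma = F^{K-1}(\tau) \), we have
\[
R^K(F^K(\tau)) = R^{K-1}\bigl(R(F(\sigma))\bigr) = R^{K-1}(\sigma) = R^{K-1}(F^{K-1}(\tau)),
\]
which equals \( \tau \) by the inductive hypothesis, with base case \( K = 0 \) being trivial.

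The only place where I would need to be careful is in confirming that the \( \AlgWalk \) main loop really does apply \( F \) and \( R \) as stateless functions of the current tape: in particular, that the behavior of each \( \AlgWalkOnce \) call depends only on the graph, the source vertex, the mode, and the current register values, and that \( \AlgWalk \) makes no other writes to the catalytic tape between calls. Both are clear from inspection of \cref{alg:walk,alg:walkOnce}. Given that, the argument above suffices, and no obstacle beyond the bookkeeping of the induction remains.
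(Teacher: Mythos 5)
Your proof is correct and takes essentially the same approach as the paper: both argue by induction on \( K \), cancelling the innermost forward/reverse pair via \cref{lem:reverse} and invoking the inductive hypothesis on the remaining \( K-1 \) calls of each. The only difference is cosmetic — you phrase the argument in terms of composed tape-transformations \( F, R \), while the paper describes the same decomposition in prose.
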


\begin{proof}
This is the same as saying the final values of all registers \( R_v \) equal the initial values.

By induction on \( K \), we can see that \( K \) calls to \( \AlgWalkOnce( G, s, \Fwd ) \) followed by \( K \) calls to \( \AlgWalkOnce( G, s, \Rev ) \) has no net effect on the registers \( R_v \).
For \( K = 0 \) this is clear.
\Cref{lem:reverse} provides the induction step.
That is, \( K+1 \) calls to each can be decomposed as (1) \( K \) calls to \( \AlgWalkOnce( G, s, \Fwd ) \), then (2) a single call to each, which by \cref{lem:reverse} has no net effect, then (3) \( K \) calls to \( \AlgWalkOnce( G, s, \Rev ) \).
Since (2) has no effect, we are left with \( K \) calls each, which by the induction hypothesis have no net effect.
\end{proof}

\subsection{Proof of \texorpdfstring{\cref{thm:randomDag}}{Theorem~\ref{thm:randomDag}} (random walk on a DAG)}\label{sec:completeRandomDagProof}

\Cref{lem:accurate} proves that \( \AlgWalk \) gives a correct answer, and \cref{cor:isCatalytic} proves that it restores its catalytic tape.
The runtime is dominated by \( 2K \) calls to \( \AlgWalkOnce \), each of which visits each vertex of \( G \) at most once.
Visiting a vertex means executing one iteration of the main loop of \( \AlgWalkOnce \), which takes \( \polylog (n+m) \) time, so the total run time is \( \tO( n K \polylog m ) = \tO( nm / \eps ) \).\footnote{%
It also uses \( \tO( n \polylog m ) \) time to count the number of edges \( m \) in order to compute \( K \).}
Each register takes \( O( \ell ) = O( \log ( m / \eps ) ) \) bits of the catalytic tape (\cref{sec:walkRegisters}), for a total of \( \tO( n \log( m / \eps ) ) \) catalytic space
The working memory only includes a constant number of variables \( \NReach \), \( v \), etc, each taking \( O( \log( nm / \eps ) ) \) working memory.
\qed

\subsection{Proof of \texorpdfstring{\cref{thm:randomGeneral}}{Theorem~\ref{thm:randomGeneral}} (random walk on a general graph)}\label{sec:completeRandomGeneralProof}

For convenience, we restate the theorem here:
\ThmRandomGeneral*
\begin{proof}
We construct an acyclic graph \( G' = (V', E') \) by creating \( T+1 \) copies of the input graph \( G = (V, E) \) and arranging them in layers, with edges going from each layer \( i \) to layer \( i+1 \).
That is, \( V' = [T+1] \times V \), and \( E' = \{ ((i, u), (i+1, v)) \mid t \in [T], (u,v) \in E \} \).

We then run \( \AlgWalk \) (\cref{alg:walk}) on the graph \( G' \), using start vertex \( (0,s) \) and sink vertex \( (T,t) \), and keeping the same parameter \( \eps \).
Since a random walk on \( G' \) from \( s \) to a sink is equivalent to a \( T \)-step random walk on \( G \), the proof of \cref{thm:randomDag} implies the algorithm will estimate the probability that a \( T \)-step random walk ends at node \( t \) witin the required additive error bound of \( \eps \).
The same proof also implies our algorithm is catalytic.

The space bounds (\( O(\log nmT/\eps) \) working space and \( \tO( nT \log( m / \eps ) ) \) catalytic space) follow directly from the proof of \cref{thm:randomDag}, since \( G' \) has \( nT \) nodes.
The runtime is the time needed to run \( \AlgWalkOnce \)
\( K = O( mT / \eps ) \)  
times.
Each call to \( \AlgWalkOnce \) takes \( \tO( T ) \) time, since every walk will have \( T \) steps.
So, the total runtime is \( \tO( mT^2 / \eps ) \).
\end{proof}

\subsection{What if we don't eliminate cycles?}\label{sec:stationary}

The proof of our algorithm's correctness relies on the graph being acyclic, and so when we are given a general graph, we are forced to pay a penalty in space and time to convert it to an acyclic one.
It is tempting to try avoiding the penalty by running the algorithm directly on a graph with cycles.

As it turns out, we end up with an algorithm that is not catalytic, but still accurately simulates a random walk, in the sense that it approaches the graph's stationary distribution: see \cref{thm:stationary}.

It would be interesting to try to modify this algorithm to be catalytic without incurring a time or space penalty.
This seems challenging, because it is possible to construct a graph that causes \( \AlgWalkOnce \) to lose information that was stored on the catalytic tape: two different initializations of the registers \( R_v \) lead to the same final register values, and so recovery is impossible.
See \cref{fig:LossyCycles}.

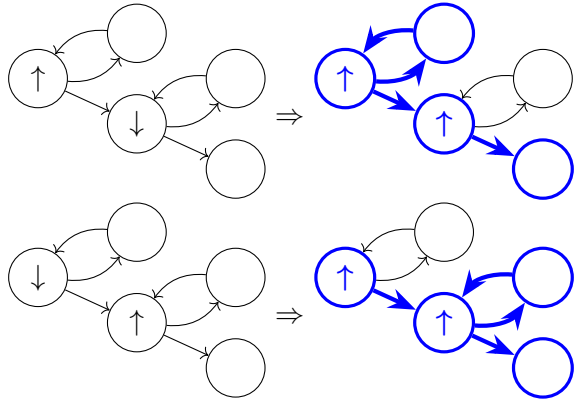
\begin{figure}
\centering
\begin{tikzpicture}[baseline=(11)]
\path
  +(0,0) node (00) [inactive configuration] {\( \uparrow \)}
  +(1.3,0.6) node (10) [inactive configuration] { }
  +(1.3,-0.6) node (11) [inactive configuration] {\( \downarrow \)}
  +(2.6,0) node (20) [inactive configuration] { }
  +(2.6,-1.2) node (21) [inactive configuration] { }
  (00) edge [inactive edge,bend right] (10)
  (10) edge [inactive edge,bend right] (00)
  (00) edge [inactive edge] (11)
  (11) edge [inactive edge,bend right] (20)
  (20) edge [inactive edge,bend right] (11)
  (11) edge [inactive edge] (21);
\end{tikzpicture}
\( \Rightarrow \)
\begin{tikzpicture}[baseline=(11)]
\path
  +(0,0) node (00) [active configuration] {\( \boldsymbol\uparrow \)}
  +(1.3,0.6) node (10) [active configuration] { }
  +(1.3,-0.6) node (11) [active configuration] {\( \boldsymbol\uparrow \)}
  +(2.6,0) node (20) [inactive configuration] { }
  +(2.6,-1.2) node (21) [active configuration] { }
  (00) edge [active edge,bend right] (10)
  (10) edge [active edge,bend right] (00)
  (00) edge [active edge] (11)
  (11) edge [inactive edge,bend right] (20)
  (20) edge [inactive edge,bend right] (11)
  (11) edge [active edge] (21);
\end{tikzpicture}

\begin{tikzpicture}[baseline=(11)]
\path
  +(0,0) node (00) [inactive configuration] {\( \downarrow \)}
  +(1.3,0.6) node (10) [inactive configuration] { }
  +(1.3,-0.6) node (11) [inactive configuration] {\( \uparrow \)}
  +(2.6,0) node (20) [inactive configuration] { }
  +(2.6,-1.2) node (21) [inactive configuration] { }
  (00) edge [inactive edge,bend right] (10)
  (10) edge [inactive edge,bend right] (00)
  (00) edge [inactive edge] (11)
  (11) edge [inactive edge,bend right] (20)
  (20) edge [inactive edge,bend right] (11)
  (11) edge [inactive edge] (21);
\end{tikzpicture}
\( \Rightarrow \)
\begin{tikzpicture}[baseline=(11)]
\path
  +(0,0) node (00) [active configuration] {\( \boldsymbol\uparrow \)}
  +(1.3,0.6) node (10) [inactive configuration] { }
  +(1.3,-0.6) node (11) [active configuration] {\( \boldsymbol\uparrow \)}
  +(2.6,0) node (20) [active configuration] { }
  +(2.6,-1.2) node (21) [active configuration] { }
  (00) edge [inactive edge,bend right] (10)
  (10) edge [inactive edge,bend right] (00)
  (00) edge [active edge] (11)
  (11) edge [active edge,bend right] (20)
  (20) edge [active edge,bend right] (11)
  (11) edge [active edge] (21);
\end{tikzpicture}
\caption{\label{fig:LossyCycles}%
An example showing why the algorithm of \cref{thm:stationary} can't restore its register values, and so is not catalytic. Two runs of the algorithm for \( T' = 4 \) steps are shown. Left: two different ways to initialize two of the catalytic registers, with \( \uparrow, \downarrow \) representing the possible values as in \cref{fig:probsAndCounts}. Right: the resulting walks highlighted in blue, and the final values of those registers. The final register values are the same, so the algorithm cannot know which version on the left to restore the registers to.}
\end{figure}

To describe the behaviour of this new algorithm, we first define four terms:
\begin{definition}[probability vector, stochastic matrix, stationary distribution, mixing time]
    A \emph{probability vector} \( v \in \R^n \) is any vector with nonnegative entries and \( |v|_1 = 1 \).
    A matrix \( W \in \R^{ n \times n } \) is \emph{(left) stochastic} if every column is a probability vector.
    The \emph{stationary distribution} of a stochastic matrix \( W \in \R^{ n \times n } \) is any probability vector \( \pi \) such that \( W \pi = \pi \).
    We say \( W \) \emph{mixes in time \( T \) with error \( \eps \)} if the stationary distribution \( \pi \) is unique and for every probability vector, \( |W^Tv - \pi| \le \eps \).
    (Here, \( W^T \) is the \( T \)-th power of \( W \), not its transpose.)
\end{definition}

We remark that our definition of mixing time implies the Markov chain is ergodic.

\begin{theorem}\label{thm:stationary}
There is an algorithm which, given a graph \( G \) with \( n \) vertices and \( m \) edges together with vertex \( v^* \) and parameters \( T, \delta \in \N \), returns a number \( \rho \) which approximates the stationary probability at \( v^* \) in the following sense.
If the random walk on \( G \) has stationary distribution \( \pi \) and mixes in time \( T \) with accuracy \( \eps \),
then
\[
\left|\rho - \pi(v^*)\right|\le \eps+\delta \ldotp
\]
The algorithm runs in time \( \tO( T m / \delta ) \) and uses space \( \tO( n ) \).
\end{theorem}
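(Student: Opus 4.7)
The plan is to run the round-robin walk of \cref{alg:walkOnce} directly on the (possibly cyclic) input graph \( G \), with no layering transformation. Allocate one register \( R_v \) per vertex (initialized arbitrarily, since we do not require the algorithm to be catalytic), and perform a single walk of \( N = \Theta(Tm/\delta) \) steps starting from an arbitrary source vertex \( s_0 \). At each step at a vertex \( v \), take the \( (R_v \bmod \DOut(v)) \)-th outgoing edge of \( v \) and then increment \( R_v \). Let \( \NumVisits_v \) count the total number of visits to \( v \) over the whole walk, and return \( \rho := \NumVisits_{v^*}/N \).

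The structural observation I will use is essentially \cref{lem:fair}: the round-robin rule enforces the global fairness bound \( |\NumVisits_v^r - \NumVisits_v/\DOut(v)| \le 2 \) for every vertex \( v \) and outgoing-edge index \( r \), where \( \NumVisits_v^r \) is the number of times edge \( r \) from \( v \) is traversed. Let \( W \) be the (left-stochastic) transition matrix of the random walk on \( G \), with \( W_{vu} = 1/\DOut(u) \) whenever \( (u,v) \in E \). Flow conservation (that \( \NumVisits_v - [v = s_0] \) equals the number of entries into \( v \)) together with fairness yields, after summing absolute errors over all vertices,
\[
\bigl\| \hat\pi - W \hat\pi \bigr\|_1 \;\le\; \frac{2m + 1}{N},
\]
where \( \hat\pi \in \R^n \) is the empirical visit distribution given by \( \hat\pi_v = \NumVisits_v/N \).

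To conclude, I would exploit that every left-stochastic matrix is a non-expansion in the \( L^1 \) norm (\( \| W x \|_1 \le \| x \|_1 \) for all \( x \in \R^n \)), so by telescoping,
\[
\bigl\| W^T \hat\pi - \hat\pi \bigr\|_1 \;\le\; \sum_{ i = 0 }^{ T - 1 } \bigl\| W^i ( W \hat\pi - \hat\pi ) \bigr\|_1 \;\le\; T \cdot \bigl\| W \hat\pi - \hat\pi \bigr\|_1 \;\le\; \frac{ (2m+1) T }{ N } \ldotp
\]
Combining this with the mixing hypothesis \( \| W^T \hat\pi - \pi \|_1 \le \eps \) and the triangle inequality gives \( \| \hat\pi - \pi \|_1 \le \eps + (2m+1) T / N \le \eps + \delta \) for our choice of \( N \), whence \( |\rho - \pi(v^*)| \le \eps + \delta \). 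The runtime is \( \tO(N) = \tO(Tm/\delta) \) and the space is \( O(n \log N) = \tO(n) \).

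The main obstacle I anticipate is the telescoping step: the key insight is that \( L^1 \) non-expansion of stochastic matrices lets the error from \( T \) iterations of an \( O(m/N) \)-approximate fixed-point relation grow only linearly (rather than exponentially) in \( T \). Once that is established, the deterministic round-robin rule manufactures a distribution which is an approximate fixed point of \( W \), and the mixing hypothesis forces it to lie close to the true stationary distribution \( \pi \). A small subtlety worth stating but not belaboring is that the fairness bound gives a per-edge round-robin error of \( O(1) \) regardless of initial register values, so the accuracy bound holds for any starting configuration of the (un-restored) catalytic tape.
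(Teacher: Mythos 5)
Your proposal is essentially the paper's proof. The paper's \cref{alg:stationary} is exactly your single long round-robin walk of length \( T' = \lceil T(m+2)/\delta \rceil \); the proof likewise goes from the per-vertex fairness bound to \( \| c - Wc \|_1 \le m+2 \) (the paper stores \( R_v \in [\DOut(v)] \), so its per-edge error is \( 1 \) rather than your \( 2 \), but this is cosmetic), then telescopes using \( L^1 \) non-expansion of \( W \) and invokes the mixing hypothesis. Two tiny corrections to your write-up: the flow-conservation identity must also account for the walk's final vertex, i.e.\ the number of entries into \( v \) is \( \NumVisits_v - [v = v_0] + [v = v_{\mathrm{end}}] \), which bumps your \( 2m+1 \) to \( 2m+2 \); and since a vertex on a cyclic graph can now be visited up to \( N \) times (not once per pass as in the acyclic case), you should state explicitly that the register width is taken to be \( \lceil \log N \rceil \) (or simply store \( R_v \in [\DOut(v)] \) as the paper does) so that the wrap-around in the fairness argument of \cref{lem:fair} still occurs at most once per register. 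Neither changes the asymptotics.
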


\begin{proof}
\newcommand{\CAfter}{ c_{ \mathrm{ after } } }
\newcommand{\CBefore}{ c_{ \mathrm{ before } } }
We modify the algorithm for \cref{thm:randomDag} so that instead of doing \( K \) walks starting at a node \( s \) and ending at a sink, our new algorithm does a single long walk for \( T' = \lceil T ( m + 2 ) / \delta \rceil \) steps, and counts the number of visits to \( v^* \).
We give the algorithm as \cref{alg:stationary}.

For simplicity, we have each register \( R_v \) store a value in \( [\DOut(v)] \) rather than forcing the range of a register to be a power of two as we did in \cref{sec:walkRegisters} for \cref{alg:walkOnce}.
(We could have done that, using the same technique, but it is not useful in this case since our algorithm would still not be catalytic.)

\begin{algorithm}
Registers: one register \( R_v \) in \( [\DOut(v)] \) for every vertex \( v \).\;
Let \( T' = \lceil T ( m + 2 ) / \delta \rceil \)\;
Initialize \( v \) to any node.\;
\( \NVisit \gets 0 \)\;
\For{\( t \gets 1 \) \KwTo \( T' \)}{
	\If{\( v=v^* \)\label{line:StationaryCheck}}{
		\( \NVisit \gets \NVisit + 1 \)\;
	}
	\tcc{Choose an outgoing edge based on \( R_v \), and update \( R_v \).}
	\( r \gets R_v \bmod{\outdeg(v)} \)\;
	\( R_v \gets (R_v + 1) \bmod 2^{ \ell } \)\;
	\( v \gets \outnbr_G(v,r) \)\label{line:StationaryStep}\;
}
\Return{\( \NVisit / T' \)}\;
\caption{\label{alg:stationary}\( \AlgStationary(G,v^*,T,v^*,\delta) \). Parameters: graph $G$, target vertex \( v^* \), mixing time \( T \), accuracy \( \delta \). Returns a number in \( [0,1] \).}
\end{algorithm}

The time and space used by this algorithm are clear enough.
It remains only to prove the final value of \( \NVisit / T' \) is within \( \eps+\delta \) of the stationary probability \( \pi(v*) \).

Let \( W \in \R^{ n \times n } \) be the random walk matrix of \( G \).
The idea behind the argument that follows is that if \( c \in \N^n \) counts the number of visits to each node, then since the algorithm gives equal attention to all of a node's outgoing edges, \( | Wc - c | \) is not too big, from which it follows that \( c \) approximates the stationary distribution.

We begin by introducing notation for counting visits to edges and vertices.
Let \( V \) be the set of vertices of \( G \).

For an edge \( (u, w) \), let \( c(u,w) \) be the number of times the variable \( v \) changes from \( u \) to \( w \) on \cref{line:StationaryStep}; in other words, the number of times the algorithm ``walks'' from \( u \) to \( w \).

Counting visits to vertices raises a subtle distinction.
For vertex \( w \), let \( \CBefore(w) \) be the number of times the variable \( v \) equals \( w \) when the condition at \cref{line:StationaryCheck} is evaluated, and let \( \CAfter(w) \) be the number of times \( v \) equals \( w \) after \cref{line:StationaryStep} is evaluated, so that
\[ \CBefore(u) = \sum_{ w \in V } c(u,w) \quad \mathrm{ and } \quad \CAfter(w) = \sum_{ u \in V } c(u,w) \ldotp \]
We think of both \( \CBefore(w) \) and \( \CAfter(w) \) as ``number of visits'', and they disagree only at the start and end of the walk the algorithm followed.
Specifically, let \( v_0 \) be the value \( v \) was initialized to, and let \( v_{\mathrm{end}} \) be its value at the end of the algorithm.
Then if \( v_0 \not= v_{\mathrm{end}} \), then \( \CAfter(v_0) = \CBefore(v_0)-1 \), \( \CAfter(v_{\mathrm{end}}) = \CBefore(v_{\mathrm{end}})+1 \), and \( \CBefore(w)=\CAfter(w) \) for all other vertices. (If \( v_0=v_{\mathrm{end}} \), then \( \CBefore \) and \( \CAfter \) agree on that vertex too.)

Similar to \cref{alg:walk}, an important property of \cref{alg:stationary} is that for every node \( u \), it uses all outgoing edges from \( u \) almost the same number of times, with the counts differing by at most one: so \( | c(u,w) - \CBefore(u) / \DOut(u) | < 1 \).
Summing over \( u \) then gives
\[ \left| \CAfter(w) - \sum_{ u \in V } \CBefore(u) / \DOut(u) \right| \le \DIn(w) \]
At this point, it becomes useful to think of \( \CBefore, \CAfter \) as vectors in \( \N^n \).
From this point of view, \( \sum_{ u \in V } \CBefore(u) / \DOut(u) \) is just the \( w \)-th coordinate of \( W \CBefore \), and so, summing over all nodes \( w \), we have
\[ | \CAfter - W \CBefore |_1 \le m = \sum_{ w \in V } \DIn(w) \]
Now, \( | \CBefore - \CAfter | \le 2 \), so
\[ | \CBefore - W \CBefore |_1 \le m+2 \ldotp \]
Since \( W \) does not increase \( 1 \)-norms, it follows that
\[ | W^t \CBefore - W^{ t+1 } \CBefore |_1 \le m+2 \]
for every \( t \in \N \), and so
\[ | \CBefore - W^T \CBefore | \le \sum_{ t=0 }^{ T-1 } | W^t \CBefore - W^{ t+1 } \CBefore | \le T ( m + 2 ) \]
and so
\[ \left| \frac{ \CBefore }{ T' } - W^T \frac{ \CBefore }{ T' } \right| \le \frac{ T ( m + 2 ) }{ T' } \le \delta \ldotp \]
Since \( W \) mixes in time \( T \) with accuracy \( \eps \), we have that
\( | W^T \CBefore / T' - \pi | \le \eps \)
and so
\[ \left| \frac{ \CBefore }{ T' } - \pi \right| \le \eps + \delta \ldotp \]
Since the algorithm returns the \( v^* \)-th coordinate of \( \CBefore \), its answer is within additive error \( \eps + \delta \) of \( \pi( v^* ) \).
\end{proof}

\section{Future Directions}

We hope these examples will inspire others to find new efficient catalytic algorithms for these or other problems.
In particular, it would be interesting to avoid the overhead in \cref{thm:randomGeneral} from converting to an acyclic graph --- \cref{thm:stationary} attempts this, but the algorithm fails to be catalytic. For connectivity, our algorithms are all incomparable (in speed, randomness, and revertibility). It would be interesting to obtain a best of both worlds result.

\ifnames
\section*{Acknowledgements}

E.P. thanks Ryan Williams for encouragement to think about algorithms in CL and useful discussions, and Ian Mertz for the suggestion to work over different moduli. J.C. thanks Michal Kouck\'y and the CCC reviewers for useful suggestions. 
\fi

\bibliographystyle{alpha}
\bibliography{bib}

\end{document}